\documentclass[journal]{IEEEtran}
\usepackage[T1]{fontenc}
\usepackage{verbatim}
\usepackage{amsmath}
\usepackage{graphicx}
\usepackage[table,xcdraw]{xcolor}
\makeatletter

\usepackage{amsthm}
\usepackage{amssymb} 
\theoremstyle{plain}
\newtheorem{theorem}{Theorem}

\newtheorem{remark}{Remark}
\usepackage{booktabs}
\usepackage{multirow}
\usepackage{array}
\newcolumntype{P}[1]{>{\centering\arraybackslash}p{#1}}
\newcolumntype{C}[1]{>{\centering\arraybackslash}m{#1}}

\def\BibTeX{{\rm B\kern-.05em{\sc i\kern-.025em b}\kern-.08em
    T\kern-.1667em\lower.7ex\hbox{E}\kern-.125emX}}
\usepackage{balance}
\usepackage{cite}
\makeatother

\definecolor{darkcyan}{rgb}{0.0, 0.35, 0.55}

\definecolor{darkgreen}{rgb}{0.0, 0.5, 0.0}

\newcommand{\orcidicon}[1]{\href{https://orcid.org/#1}{\raisebox{0.4ex}{\includegraphics[height=1.6ex]{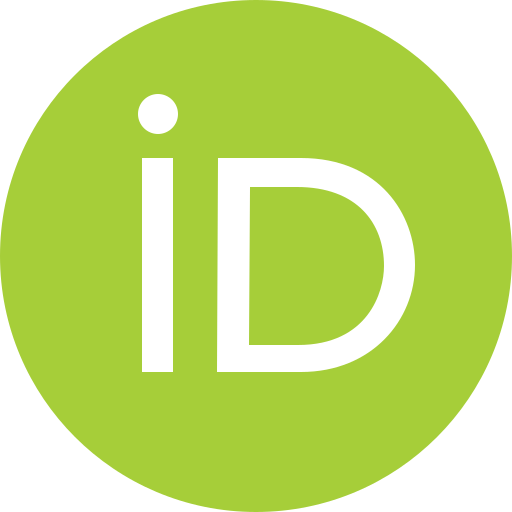}}}}

\usepackage[colorlinks=false, linkcolor=black, citecolor=black, urlcolor=black, pdfborder={0 0 0}, unicode]{hyperref}    

\newcounter{example}[subsection]
\renewcommand{\theexample}{\thesubsection.\arabic{example}}
\newcommand{\example}{\refstepcounter{example}\textit{\textbf{Example \theexample:} }}

\usepackage{fancyhdr}

\fancypagestyle{firstpage}{%
  \fancyhf{} 
  
  \fancyhead[L]{\scriptsize 
 	This work has been submitted to the IEEE for possible publication. Copyright may be transferred without notice, after which this version may no longer be accessible. 
  }%
}
\begin{document}
\title{Joint Sampling Frequency Offset Estimation and Compensation Algorithms Based on the Farrow Structure}
\author{Deijany Rodriguez Linares\orcidicon{0009-0004-1846-9496}, \textit{Graduate Student Member, IEEE}, Oksana Moryakova\orcidicon{0009-0001-6464-5452}, \textit{Graduate Student Member, IEEE}, and H\aa kan Johansson\orcidicon{0000-0001-6329-9132}, \textit{Senior Member, IEEE}%
\thanks{The authors are with the Department of Electrical Engineering, Linköping University, 58183 Linköping, Sweden. Email: \{deijany.rodriguez.linares, oksana.moryakova, hakan.johansson\}@liu.se.}
\thanks{This work was funded by ELLIIT and Sweden's innovation agency.}}

\maketitle
\thispagestyle{firstpage}

\begin{abstract}
	This paper presents joint sampling frequency offset (SFO) estimation and compensation algorithms based on the Farrow structure. Unlike conventional approaches that treat estimation and compensation separately, the proposed framework exploits the interpolator structure to enable a low-complexity, fully time-domain solution applicable to arbitrary bandlimited signals, without imposing constraints on the waveform or requiring Fourier transform based processing. The estimation stage can operate on a real-valued component of a complex signal and supports the simultaneous estimation of SFO and sampling time offset, while being inherently robust to other synchronization impairments such as carrier frequency offset. 
	The proposed estimation algorithms rely on two complementary methods, specifically, Newton's method and iterative least-squares formulation. 
	The implementations of the estimators are presented and the overall computational complexity is analyzed, showing that the complexity scales only linearly with the number of samples employed. 
	Numerical results for real and complex multisine and bandpass-filtered white noise signals demonstrate accurate estimation and effective compensation over a wide range of operating conditions, confirming the flexibility and efficiency of the proposed approach. Moreover, the influence of the Farrow structure approximation error on the SFO estimation accuracy is investigated.  
\end{abstract}

\begin{IEEEkeywords}
	Farrow structure, iterative least-squares method, Newton's method, Sampling frequency offset. 
\end{IEEEkeywords}


\maketitle

\section{Introduction}
\label{sec:Intro}
\IEEEPARstart{I}{n} digital communication systems, accurate synchronization is crucial for the correct reception and interpretation of signals. 
Importantly, the residual synchronization errors that remain after the initial alignment stage are especially challenging, as even small mismatches can result in significant degradation of system performance \cite{Kai_2004, Blumenstein_2018, Qi_2025, Morelli_2010, Son_2025, Li_2024, Wang_2025}.
Among such synchronization challenges, sampling frequency offset (SFO) is particularly important to mitigate in many applications and becomes more crucial in modern high-speed and wide-band communication systems, where the use of large signal bandwidth necessitates very high sampling rates, and thus even tiny differences between sampling clocks lead to a noticeable accumulative timing drift resulting in inter-carrier and inter-symbol interferences
\cite{Morelli_2010, Son_2025, Li_2024, Wang_2025, Matin_2021, Wang_2003, Giroto_2025, Brunner_2025}. 

The problem of SFO arises in several practical scenarios. In the point-to-point case, the transmitter and receiver operate on independent sampling clocks, and SFO must be estimated and corrected at the receiver to enable consecutive demodulation of the signal with high accuracy \cite{Morelli_2010, Son_2025, Matin_2021, Tsai_2005, Hou_2020, Parlin_2023, Oberli_2007, Qi_2023}. In distributed communication systems such as wireless acoustic sensor networks (WASN) \cite{Cherkassky_2017, Hu_2023} and bistatic integrated sensing and communication (ISAC) \cite{Giroto_2025, Brunner_2025}, this problem is escalated to multi-node synchronization as each remote node has an independent sampling clock, causing inter-node phase and timing drift, thereby degrading joint processing performance. Even within a single receiver, particularly, in massive multiple-input-multiple-output (mMIMO) systems, SFO can occur between groups of analog-to-digital converters (ADCs) under certain hardware architectures, leading to inconsistent timing across receiver chains and affecting further processing algorithms, e.g., reciprocity calibration, beamforming accuracy and localization \cite{Tian_2026}. 

Most of the existing methods for SFO estimation are designed for orthogonal frequency division multiplexing (OFDM) signals, and therefore carry out the estimation in the frequency domain \cite{Son_2025, Hou_2020,Li_2024, Nguyen_2009, Tsai_2005, Kim_2011, Xu2014, WU2024, Oberli_2007, Giroto_2025, Brunner_2025}. 
Moreover, in OFDM systems, SFO  and carrier frequency offset (CFO) are commonly estimated jointly, because they both appear as phase rotations per carrier 
or are coupled through a shared reference oscillator, creating frequency-dependent drifts \cite{Son_2025, Nguyen_2009, Tsai_2005, Kim_2011, Xu2014, WU2024, Oberli_2007}. 

Alternatively, a time-domain SFO estimation method utilizing the structure of OFDM signals has been proposed to reduce the implementation complexity \cite{Oswald_2004}, but the dependency on other errors has not been investigated in this study.
Additionally, a joint time-domain estimation of wireless channel responses, SFO, and CFO has been introduced for arbitrary-waveform signals \cite{Parlin_2023}; however, this approach relies on least-mean-squares (LMS) algorithms, which do not guarantee convergence without proper step-size parameters tuning \cite[Prop.~2, p. 508]{hayes1996}.
Furthermore, SFO estimation methods assume perfect alignment of the initial samples or rely on compensating the sampling time offset (STO) later in the receiver chain. Even though STO compensation is generally not required directly after sampling the signal, its presence still introduces an additional challenge for achieving accurate SFO estimation.

 Existing SFO estimation algorithms generally provide high estimation accuracy, but their computational complexity is often substantial, which leads to increased power consumption and reduced suitability for resource-constrained systems. Furthermore, many research works discuss implementation complexity only briefly and do not provide detailed complexity analyses, making it difficult to compare methods in terms of computational resources. It is also important to note that existing techniques typically work on a complex-valued signal, 
 consequently increasing the number of real multiplications\footnote{\label{foo:complex} One complex multiplication requires a minimum of three real multiplications \cite{Fam_1988}.} required for the estimation.

Digital compensation of SFO can be performed either in the frequency domain (commonly used for multi-carrier systems), by applying subcarrier-dependent phase rotations \cite{Harish_2017, Son_2025, Gao_2024}, or in the time domain, where the received sampled signal is interpolated \cite{Tsai_2005, Hamila_1998, Hu_2018, Gao_2024, Oberli_2007, Son_2025}.  
As discussed and analyzed in \cite{Son_2025} and \cite{Gao_2024}, with an increase of the SFO, time-domain digital interpolator-based methods significantly outperform the frequency-domain compensators, especially for wideband signals and systems with small subcarrier spacing, by providing more precise correction. Moreover, for general bandlimited waveforms that do not require FFT processing, time-domain SFO compensation is inherently more efficient, avoiding the additional computational overhead associated with frequency-domain operations.
Existing methods typically carry out the SFO compensation by interpolating the signal using the Farrow structure \cite{Hamila_1998, Hu_2018, Gao_2024, Erup_1993}. It is important to note that the Farrow structure is also utilized for a fixed (time-invariant) fractional time-delay estimation and compensation \cite{Dooley_1999, Olsson_2006, Tseng_2012}, however, to the best of the authors' knowledge, it has not been explored for the SFO estimation before. 

\subsection{Contributions}\label{sec:contributions}
In this work, we propose time-domain SFO estimation algorithms that employ the Farrow structure used for the SFO compensation. This is an extension of the work \cite{deiokshak2025} presented at a conference by the authors. The main contributions of the paper are as follows.
\begin{enumerate}
	\item The implementation complexity of the proposed estimation algorithms is reduced by exploiting the Farrow structure already used for compensation, thereby avoiding additional hardware or computational blocks. In addition, the proposed method utilizes only one real-valued component for complex signals for estimation, while the resulting estimates are applied to both signal components during compensation, yielding further significant overall complexity reductions (see also Footnote~\ref{foo:complex}).
	\item The proposed Farrow based estimators work for arbitrary bandlimited signals, and thus impose no additional constraints on the waveform structure, and there is no need for the FFT computation inherent in frequency-domain methods. 
	\item The proposed algorithms handle the presence of STO and, importantly, enables joint SFO and STO estimation even in the presence of other synchronization errors, like CFO and phase offset (PO). Consequently, after SFO and STO estimation and compensation, any available CFO and PO estimation and compensation methods can be applied \cite{Singh_2025}.  
	\item The proposed algorithms, which estimate two parameters (SFO and STO), rely on iterative updates computed using two complementary methods. To achieve rapid and precise refinement of the SFO estimate, we employ a Newton-based update rule. To further reduce implementation complexity, an alternative update algorithm based on an iterative least-squares (ILS) formulation is introduced. For both methods, an efficient way of computing time-index powered weighted sums using cascaded accumulators is used, that significantly reduces the implementation complexity by eliminating considerable parts of the multiplications \cite{deiokshak2026}. Thereby, the overall implementation complexity of the estimators scales only linearly with the number of samples employed. Moreover, to further reduce complexity while ensuring convergence, both update methods are simplified using a first-degree Farrow structure, leading to a common and computationally efficient solution.
	\item The proposed approach to SFO estimation targets mainly the problem of ADCs synchronization at a single receiver, where the output of one ADC is taken as the reference; however, the proposed method can also be used in other scenarios considering that the reference signal is known or partially known at the receiver. 
\end{enumerate}

\subsection{Outline}
Following this introduction, Section \ref{sec:compensation} gives a brief overview of the SFO problem, its compensation based on the Farrow structure, and states the proposed SFO estimation principle using the Farrow structure.
In Section \ref{sec:Newton}, the proposed estimation method using Newton's method is introduced, followed by the proposed ILS method in Section \ref{sec:ILS}, with simulation results showing the application and performance of the proposed estimators presented in Section \ref{sec:results}. In Section \ref{sec:one_branch}, simplified (first-degree) Farrow-based estimators are shown. Finally, Section \ref{sec:conclusions} concludes the paper.

\section{Problem Statement and Proposed SFO Time-Domain Estimation Utilizing the Farrow Structure}
\label{sec:compensation}
\subsection{SFO Model}
Let $x_a(t)$ represent a bandlimited continuous-time signal to be sampled with slightly different sampling frequencies, $f_0$ and $f_1 = f_0 + \Delta_{f}$, where $\Delta_{f}$ denotes the SFO between them. The sampled signals can be expressed as
\begin{align}
	x_0(n)=x_a(nT), \quad x_1(n)=x_a(n(1+\Delta)T+\varepsilon), 
	\label{eq:problem_offset}
\end{align}
where $T = 1/f_0$ is the sampling period of the reference signal,  $\Delta=-\Delta_{f}/(f_0+\Delta_{f})$ represents the difference between the sampling periods, and $\varepsilon$ is the STO between the two signals. It is important to note that while the SFO, $\Delta_{f}$, and thereby $\Delta$, remain constant over time, the time deviation, $n\Delta T$ in \eqref{eq:problem_offset}, between samples $x_0(n)$ and $x_1(n)$, accumulates with increasing sample index \(n\).
The challenge here is to accurately estimate $\Delta$ and $\varepsilon$ to adequately compensate for these discrepancies. 
Following the common assumption, we consider that a coarse synchronization stage has already been performed, and only a small residual error remains, such that
\begin{align}
	|n\Delta+\varepsilon|\leq d_{\text{max}}, \quad d_{\text{max}}=0.5
	\label{eq:max_d}
\end{align}
for $n=0,1,\dots,N-1$ with $N$ being the number of samples being used for estimation.\footnote{The choice $d_{\text{max}}=0.5$ is justified in Section~\ref{sec:comp_farrow_structure}.} Throughout the paper, estimating $\Delta$ will denote SFO estimation, as the SFO follows directly from $\Delta$. 

\subsection{SFO Compensation Using the Farrow Structure}
\label{sec:comp_farrow_structure}
\begin{figure}[tbp]
	\centering \includegraphics[scale=1.0]{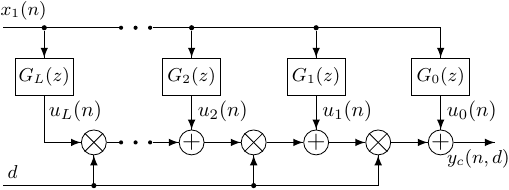}
	\caption{Variable-fractional-delay filter based on the Farrow structure.} 
	\label{Flo:farrow-equivalence-scheme}
\end{figure}
The Farrow structure \cite{farrow1998}, depicted in Fig. \ref{Flo:farrow-equivalence-scheme}, allows for adjusting a time-varying fractional delay $d$ in real time, ensuring high accuracy and efficient implementation due to the use of fixed-coefficient linear-phase finite-impulse-response (FIR) filters $G_k(z)$ \cite{Valimaki_1995, johansson2003}. In the context of the SFO considered here, the fractional delay $d$ is represented in terms of $\Delta$ and $\varepsilon$ as
\begin{equation}
	d(n, \Delta, \varepsilon)= n\Delta + \varepsilon,
	\label{eq:d}
\end{equation}
with the compensated signal\footnote{\label{foo:complex_value}For a complex-valued signal, each component (real and imaginary parts) must be compensated separately, i.e., using two instances of the Farrow structure. It is noted that such compensation is needed in all SFO correction techniques independently of the estimation method.} given by
\begin{equation}
y_c(n, \Delta, \varepsilon) = \sum_{k=0}^{L} d^k(n, \Delta, \varepsilon) u_k(n),
\label{eq:farrow_output}
\end{equation}
where $u_k(n) = x_1(n) * g_k(n)$ represents the output of subfilter $G_k(z)$, with $*$ denoting convolution. 
Further, throughout the paper, $d$ is used instead of $d(n, \Delta, \varepsilon)$ to simplify the expressions.

The  filter with a variable fractional delay (VFD) $d$ based on the Farrow structure can be designed by obtaining the subfilter impulse response coefficients $g_k(n)$ using any of the available methods, e.g., \cite{johansson2003, Deng_2006,  Huang_2018, Zhao_2019, Johansson_2013}, yielding $L$ optimized impulse responses of length $N_G+1$, with $G_0(z)$ being a pure delay when considering linear-phase FIR implementations, i.e., $G_0(z)=z^{-N_G/2}$.\footnote{While the subfilters can be of different orders \cite{johansson2003, Deng_2006}, here, we consider the same order for all subilters $G_k(z)$, $k=1,\dots, L$, for simplicity.}  
Then, the overall frequency response of the VFD filter is given by
\begin{align}
	G(e^{j\omega T}, d) = \sum_{k=0}^{L} d^k  G_k(e^{j\omega T}),
	\label{eq:vfd_freqresp}
\end{align}
and the filter complexity depends on the order $N_G$ and degree $L$, which are determined by the desired approximation error $\delta_c$  and the cut-off frequency $\omega_c T {\in} (0, \pi)$. In this paper, we consider VFD filters designed in the minimax sense and even order $N_G$ since it results in a smaller number of multiplications required \cite{johansson2003}. That is, the VFD filter specification is
\begin{align}
	\left|G_e(e^{j\omega T}, d)\right| \leq \delta_c
\end{align}
with 
\begin{align}
	G_e(e^{j\omega T}, d) = G(e^{j\omega T}, d) - G_\text{des}(e^{j\omega T}, d)
\end{align}
for $\omega T \in [0, \omega_c T]$ and $|d|\leq 0.5$, with $G(e^{j\omega T}, d)$ as in \eqref{eq:vfd_freqresp} and $G_\text{des}(e^{j\omega T}, d)=e^{-j\omega T(d+N_G/2)}$ being the desired response. Reducing the approximation error $\delta_c$ and increasing $\omega_c T {<} \pi$ tighten the requirements for the filter and result in higher order $N_G$ and a greater number of branches $L+1$. 

\subsection{Proposed SFO Estimation Principle Using the Farrow Structure} 
\label{sec:estimator}
In this paper, to estimate the SFO, we propose incorporating STO as the second parameter for a joint estimation and employ the compensator described above, specifically, the output of the Farrow structure $y_c(n, \Delta, \varepsilon)$ in \eqref{eq:farrow_output}, which is assumed to be a real-valued signal (see Item 1 in Section \ref{sec:contributions} and Footnote~\ref{foo:complex_value}). The objective function is expressed as the scaled-by-half\footnote{\label{foo:fraction}The factor $1/2$ is used in \eqref{eq:cost} for notation simplicity; it cancels out when taking the derivatives of the cost function.} squared error (SE) between the compensated signal $y_c(n, \Delta, \varepsilon)$ and the reference signal $x_0(n)$ as 
\begin{align}
    F(\Delta, \varepsilon) &= \frac{1}{2} \sum_{n=0}^{N-1} \Big(y_c(n, \Delta, \varepsilon) - x_0(n) \Big)^2.
	\label{eq:cost}
\end{align}
The objective is to estimate the parameters ${\Delta}$ and ${\varepsilon}$ by minimizing $F(\Delta, \varepsilon)$, thus solving the optimization problem
\begin{equation}
	\widehat{\Delta}, \widehat{\varepsilon} = \arg\min_{\Delta, \varepsilon} F(\Delta, \varepsilon),
	\label{eq:argmin_cost}
\end{equation}
which is nonlinear in parameters ${\Delta}$ and ${\varepsilon}$ for $L>1$. 
In the proposed approach, we solve this optimization problem iteratively according to the update rule

\begin{equation}
	\mathbf{w}^{(m+1)} = \mathbf{w}^{(m)} - \boldsymbol{\delta}_\mathbf{w}^{(m)},
	\label{eq:update}
\end{equation}
where a new set of parameters $\mathbf{w}^{(m+1)}=[\Delta^{(m+1)}, \varepsilon^{(m+1)}]^\top$ for iteration $m+1$ can be computed based on the values $\mathbf{w}^{(m)}$ at the previous iteration $m$ and the updates $\boldsymbol{\delta}_{\mathbf{w}}^{(m)}$, given by
\begin{align}
	{\boldsymbol{\delta}}_\mathbf{w}^{(m)} = 
	\begin{bmatrix}
		{\delta}_\Delta^{(m)}, & {\delta}_\varepsilon^{(m)}
	\end{bmatrix}^\top.
	\label{eq:update_delta}
\end{align}
These updates are computed according to two complementary methods, which will be discussed in Sections \ref{sec:Newton} and \ref{sec:ILS}, respectively.
Throughout the paper, we assume the starting point for the iterative optimization problem in \eqref{eq:update} being
\begin{align}
	\Delta^{(0)}=\varepsilon^{(0)}=0.
	\label{eq:starting_point}
\end{align}

\section{Proposed SFO Estimator Using Newton's Method}
\label{sec:Newton}
Given the objective function in \eqref{eq:cost}, which is twice differentiable, we propose to use Newton's method \cite[Sec.~9.5, pp.~484--496]{boyd2004convex} (a.k.a. Newton-Raphson) to compute the updates in \eqref{eq:update_delta}. This method uses second-order information from the Hessian matrix, allowing it to account for the local curvature of the cost function and thereby take steps with rapid convergence
\cite{kochenderfer2019algorithms}.
It is implemented through the update rule in \eqref{eq:update}, with
\begin{align}
	\boldsymbol{\delta}_{\textbf{w}}^{(m)}=\big(\mathbf{H}^{(m)}\big)^{-1} \mathbf{g}^{(m)}
	\label{eq:delta_w_Newton}
\end{align}
being the Newton's update for the parameters $\mathbf{w} = [\Delta, \varepsilon]^{\top}$. Here, $\mathbf{g}=\nabla F(\Delta, \varepsilon)$ stands for the gradient of the cost function defined in \eqref{eq:cost}, $\mathbf{H}=\nabla^2 F(\Delta, \varepsilon)$ for its Hessian matrix, and $m$ for the iteration index. To facilitate the computation of the updates, both the gradient and Hessian matrix are expressed as functions of $d$ according to \big(recall that $\frac{\partial F}{\partial \Delta} =\frac{\partial F}{\partial d} \frac{\partial d}{\partial \Delta}$ and $\frac{\partial F}{\partial \varepsilon} =\frac{\partial F}{\partial d} \frac{\partial d}{\partial \varepsilon}$\big)
\begin{align}
	\hspace{-0.15cm}
   \mathbf{H}= \begin{bmatrix}
        \displaystyle \frac{\partial^2 F}{\partial \Delta^2} & \displaystyle \frac{\partial^2 F}{\partial \Delta \partial\varepsilon} \\ \\
        \displaystyle  \frac{\partial^2 F}{\partial\varepsilon \partial \Delta} & \displaystyle\frac{\partial^2 F}{\partial \varepsilon^2}
    \end{bmatrix}
    & =\begin{bmatrix}
    \displaystyle\sum_{n=0}^{N-1} n^2 \frac{\partial^2 F_n}{\partial d^2} & \displaystyle\sum_{n=0}^{N-1} n \frac{\partial^2 F_n}{\partial d^2} \\ \\
    \displaystyle\sum_{n=0}^{N-1}  n \frac{\partial^2 F_n}{\partial d^2} & \displaystyle\sum_{n=0}^{N-1}  \frac{\partial^2 F_n}{\partial d^2}
    \end{bmatrix}\hspace{-0.12cm},\label{eq:Hessian} \\
	\mathbf{g} =  
    \begin{bmatrix} 
        \displaystyle\frac{\partial F}{\partial \Delta} &
        \displaystyle\frac{\partial F}{\partial \varepsilon}
    \end{bmatrix}^{\top}
	&=\begin{bmatrix} 
		\displaystyle\sum_{n=0}^{N-1}n\frac{\partial F_n}{\partial d} &
		\displaystyle\sum_{n=0}^{N-1}\frac{\partial F_n}{\partial d} 
	\end{bmatrix}^{\top}\hspace{-0.1cm}, \label{eq:gradient}
\end{align}
where  $\displaystyle F=F(\Delta, \varepsilon)=\sum_{n=0}^{N-1} F_n$ as in \eqref{eq:cost}, with
\begin{align}
	F_n=\frac{1}{2}\big(y_c(n, \Delta, \varepsilon) - x_0(n)\big)^2.
\end{align}  
Here, the first- and second-order derivatives of $F_n$ are
\begin{align}
	\frac{\partial F_n}{\partial d} 
	&=  \Bigg(\sum_{k=0}^{L} d^k u_k(n)-x_0(n)\Bigg) \Bigg(\sum_{k=1}^{L} k d^{k-1} u_k(n)\Bigg), \hspace{-3.0 pt} \label{eq:first_derivative}\\
	\frac{\partial^2 F_n}{\partial d^2} 
	&= \Bigg(\sum_{k=1}^{L} k d^{k-1} u_k(n)\Bigg)^2 + \Bigg(\sum_{k=0}^{L} d^k u_k(n)-x_0(n)\Bigg) \nonumber \\ 
	&\times\Bigg(\sum_{k=2}^{L} k(k-1) d^{k-2} u_k(n)\Bigg). \label{eq:second_deriv}
\end{align}

Provided that the fractional delay at the first iteration (i.e., $m{=}1$) satisfies $|d|{\le}0.5$ according to \eqref{eq:max_d} and \eqref{eq:d}, the algorithm operates within the intended fractional-delay region of the VFD filter; moreover, since the cost function in \eqref{eq:cost} is a polynomial in $d$, it is twice differentiable with a locally Lipschitz continuous Hessian. If, in addition, the Hessian is positive definite in a neighborhood of a stationary point,\footnote{\label{foo:Hessian_pd}In all numerical experiments conducted, the Hessian evaluated at every iteration remained positive definite, with no indications of numerical ill-conditioning. 
However, we have not been able to prove it mathematically, except for the special case of $L=1$. This will be shown in Section~\ref{sec:ILS} for the ILS, where the corresponding matrix $\mathbf{Q}$ in \eqref{eq:Q} coincides with the Hessian in \eqref{eq:Hessian}.} then Newton’s method exhibits local quadratic convergence \cite[Th.~3.5, p.~44]{nocedal2006numerical}. 

\begin{remark}\label{rem:starting_point}
At the first iteration, with the initialization in \eqref{eq:starting_point}, the derivatives in \eqref{eq:first_derivative} and \eqref{eq:second_deriv} simplify respectively to
\begin{align}
		\left.\frac{\partial F_n}{\partial d}\right|_{d=0} &= \Big(u_0(n)-x_0(n)\Big) u_1(n),\label{eq:remark1_first}\\
		\left.\frac{\partial^2 F_n}{\partial d^2}\right|_{d=0} & =\Big(u_1(n)\Big)^2 + 2\Big(u_0(n)-x_0(n)\Big) u_2(n).\label{eq:remark1_second}
\end{align}
This means that the terms for $k\geq3$ are not utilized in the first iteration of the estimator, i.e., the VFD filter in \eqref{eq:vfd_freqresp} simplifies to the second-degree Farrow structure.
\end{remark}

\begin{remark}\label{rem:L1}
	For the first-degree Farrow structure, i.e., for the case when $L=1$, the expressions in \eqref{eq:first_derivative} and \eqref{eq:second_deriv} are simplified respectively to 
	\begin{align}
		\left.\frac{\partial F_n}{\partial d}\right|_{L=1} &=\big(u_0(n)+d u_1(n)-x_0(n)\big) u_1(n),\\
		\left.\frac{\partial^2 F_n}{\partial d^2}\right|_{L=1} & = \big(u_1(n)\big)^2,
	\end{align}
	leading to the linear-least-squares solution as will be discussed in Section~\ref{sec:one_branch}.
\end{remark}

\subsection{Low-Complexity Implementation} \label{sec:complexity}
The proposed Newton's based estimator can be implemented cost-efficiently using the computations of the derivatives presented in Fig. \ref{Flo:derivatices}. Furthermore, we introduce an efficient method below for calculating the elements in \eqref{eq:Hessian} and \eqref{eq:gradient}.

\begin{figure*}[tbp]
	\centering
	\centering \includegraphics[scale=0.93]{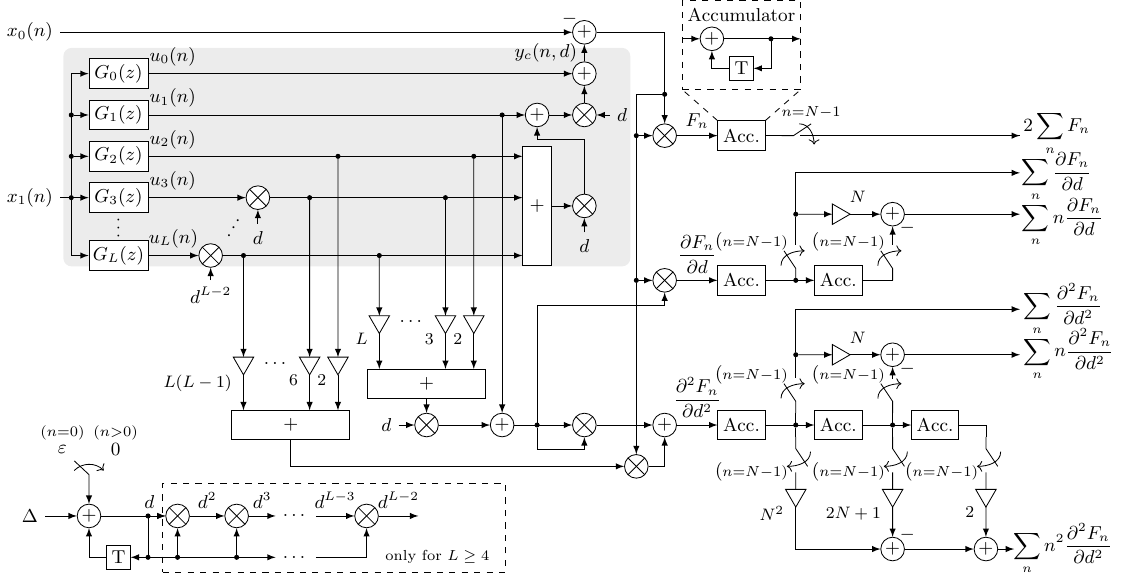}
	\caption{Implementation of the SFO compensator (highlighted with the gray rectangular) and computations of the elements in \eqref{eq:Hessian} and \eqref{eq:gradient} required for the SFO estimator using the Newton based algorithm.}\label{Flo:derivatices}
\end{figure*}
\subsubsection{Time-Index and Time-Index-Squared Weighted Sum Computations Based on Cascaded Accumulators} \label{sec:accumulators}
To efficiently calculate the time-indexed weighted (TIW) sums, $\sum_n n \frac{\partial F_n}{\partial d}$ and $\sum_n n \frac{\partial^2 F_n}{\partial d^2}$, and time-index-squared weighted (TISW) sum, $\sum_n n^2 \frac{\partial^2 F_n}{\partial d^2}$, in \eqref{eq:Hessian}--\eqref{eq:gradient}, related to the first and second derivatives, we propose an efficient implementation based on cascaded accumulators \cite{deiokshak2025}. Denoting the input to the first accumulator $v_n= \frac{\partial^2 F_n}{\partial d^2}$, the second derivative in \eqref{eq:Hessian}, it can be shown that the output of the first, second, and third cascaded accumulators can be expressed in terms of the input $v_n$ as\footnote{A detailed proof is provided in \cite{deiokshak2026}.}
\begin{align}
	A_1(N) &= \sum_{n=0}^{N-1} v_n, \\
	A_2(N) &= \sum_{n=0}^{N-1} (N-n)v_n,  \label{eq:accum2}\\
	A_3(N) &= \sum_{n=0}^{N-1} \frac{(N-n)(N-n+1)}{2} v_n.
\end{align}
To generate the matrix elements in \eqref{eq:Hessian}, the output of the first accumulator, $A_1(N)$, directly gives the sum of $v_n$, while the TIW and TISW sums are computed by utilizing the outputs of the accumulators scaled by constants as
\begin{align}
	\sum_{n=0}^{N-1} n v_n   &= NA_1(N) - A_2(N),  \label{eq:S2} \\ 
	\sum_{n=0}^{N-1} n^2 v_n & = N^2 A_1(N) - (2N+1)A_2(N) +2A_3(N). \label{eq:S3}
\end{align} 
Likewise, the elements in \eqref{eq:gradient} can be obtained using two cascaded accumulators, with the input to the first accumulator set as $v_n = \frac{\partial F_n}{\partial d}$ and with the outputs computed as in \eqref{eq:accum2}. Then the desired TIW sum can be calculated as in \eqref{eq:S2}.

\subsubsection{Implementation Complexity}
The implementation complexity of the proposed estimator comprises the update computations as described in \eqref{eq:update}, with the subsequent matrix inversion in \eqref{eq:delta_w_Newton}, and the computation of the derivatives to obtain the elements in \eqref{eq:Hessian} and \eqref{eq:gradient}, via the proposed method for the sum of multiplications using cascaded accumulators outlined above. 

To assess the complexity, we distinguish between general multiplications and fixed multiplications since the latter can be efficiently implemented using adders and shift operators \cite{Sarband_2025}. Additionally, some parts of the estimation algorithm, specifically the subfilter outputs $u_k(n)$, $k = 0, \dots, L$, multiplications by $d$, and additions required to compute the output $y_c(n,d)$, are utilized for compensation (in Fig.~\ref{Flo:derivatices}, this part is highlighted with a gray rectangle). Given that they are needed regardless of the estimation method, their cost is excluded from the overall complexity of the estimator. 
Further, the total complexity of the Newton's based SFO estimator is constructed from the following parts.
\begin{itemize}
	\item Computing $d$ and its powers based on $\Delta$ and $\varepsilon$ (up to $d^{L-2}$ as can be seen at the bottom of Fig.~\ref{Flo:derivatices}) requires $\max\{L{-}3, 0\}$ general multiplications and one addition per sample. 
	\item Obtaining the elements $\partial F_n/\partial d$ and $\partial^2 F_n/\partial d^2$ requires four general multiplications, $2L-2$ fixed multiplications, and $2L{-}1$ additions per sample for $L\geq2$, and no fixed multiplications, two general multiplications and one addition per sample for $L{=}1$.
	\item Each accumulator in \eqref{eq:S2}--\eqref{eq:S3} needs $N-1$ additions per $N$ samples, and five fixed multiplications and four additions per $N$ samples are required to obtain the elements in \eqref{eq:Hessian} and \eqref{eq:gradient} based on the accumulator outputs.
	\item The multiplication of the $2\times 2$ matrix inverse and $2\times 1$ vector in \eqref{eq:delta_w_Newton}, where $h_a,h_b,h_c$ denote the scalar entries of the Hessian matrix $\mathbf H$ and $g_a,g_b$ the corresponding elements of the gradient vector $\mathbf g$, can be computed as
	\begin{align}
		\mathbf{H}^{-1}\mathbf{g}=&
		\begin{bmatrix}
			h_{a} & h_{b}\\
			h_{b} & h_{c}
		\end{bmatrix}^{-1} 
		\begin{bmatrix}
			g_{a}\\
			g_{b}
		\end{bmatrix}\notag\\
		=&
		\frac{1}{h_{a}h_{c}-h_{b}^2}
		\begin{bmatrix}
			h_{c}g_{a} -h_{b}g_{b}\\
			-h_{b}g_{a} + h_{a}g_{b}
		\end{bmatrix},
	\end{align}
	which requires eight general multiplications, three additions, and one division.
	\item Update in \eqref{eq:update} needs two additions per batch of $N$ samples.
\end{itemize}
Considering all the implementation aspects mentioned above, the proposed estimation algorithm requires $\max\{L{+}1, 4\}{\times} N {+} 8$ general multiplications, $(2L{-}2){\times} N{+}5$ fixed multiplications, $(2L{+}5){\times} N{+}4$ additions, and one division per batch of $N$ samples for $L\geq2$, and $2{\times}N {+} 8$ general multiplications, five fixed multiplications, $7{\times} N{+}4$ additions, and one division per batch of $N$ samples for $L{=}1$. 

These expressions provide an upper bound for the implementation complexity because in special cases the structure in Fig.~\ref{Flo:derivatices} can be simplified and some operations can be eliminated considering the starting point for the estimation [see Remark~\ref{rem:starting_point} above \eqref{eq:remark1_first}]. 
Note that the complexity of obtaining the output $2\sum_{n=0}^{N-1} F_n$ (i.e., one general multiplication and one accumulator) is not taken into account for the overall estimator complexity because the cost function does not need to be computed in practical implementations if the number of iterations $m$ is fixed or the stop criteria for the estimation problem is defined without accounting for the cost function value.
Further, in regular Farrow structures, $L$ is a small number, typically between one and five as will be seen in Section~\ref{sec:complexity_estimation_accuracy}.

It is important to highlight that the proposed estimator, which relies on the Newton's method, requires only a few iterations to converge if the number of samples used for estimation is sufficient relative to the noise level, which will be discussed in the next paragraph, and thus the total implementation complexity per sample is low. Moreover, unlike frequency-domain estimation methods, the proposed algorithm does not require the FFT computation and uses only real-valued signals, thus the number of arithmetic operations is reduced. This results in linear implementation complexity of the proposed Newton based estimator, and will be further discussed in Section~\ref{sec:ILS_complexity}.

\subsubsection{Number of Samples Used for Estimation}
\label{sec:number_samples}
In general, VFD filters are designed to handle only the fractional component of the delay between adjacent samples. As a result, limiting the delay parameter to the interval $d\in[-0.5, 0.5]$ is sufficient.\footnote{In general, the interval for $d$ can be increased, but the implementation complexity of the VFD filter will be significantly higher due to a greater number of branches and higher filter orders \cite{Tseng_2012}.} Thus, in the case of estimation using the Farrow filter, with $d$ as defined in \eqref{eq:d}, we get the bound in \eqref{eq:max_d}.
This gives the following upper bound of the number of samples $N$ per iteration used for the estimation methods based on the Farrow structure as $N \leq (0.5 - \varepsilon)/\Delta$.
Further, to ensure the stability of the proposed estimation methods and to guarantee that the matrix $\mathbf{H}$ in \eqref{eq:update} is full rank (see Footnote~\ref{foo:Hessian_pd}), the number of samples must exceed the number of estimated parameters by at least one. Consequently, a lower bound of $N>2$ is required. This implies that 
$N$ must be selected within the interval
\begin{align}
	2 < N \leq \frac{0.5 - \varepsilon}{\Delta}.
	\label{eq:N_bounds}
\end{align}
In the presence of noise, the number of samples must be significantly larger than the number of parameters  to ensure accurate parameter estimation. At the same time, 
$N$ also directly affects the computational and implementation complexity of the estimation methods as discussed above. Therefore,  $N$ should be chosen as small as possible within the bounds in \eqref{eq:N_bounds} while still providing satisfactory estimation performance, resulting in a trade-off between estimation accuracy and complexity as will be shown in Section~\ref{sec:ex_number_of_samples}.

 \begin{figure*}[t!]
	\centering
	\centering \includegraphics[scale=0.93]{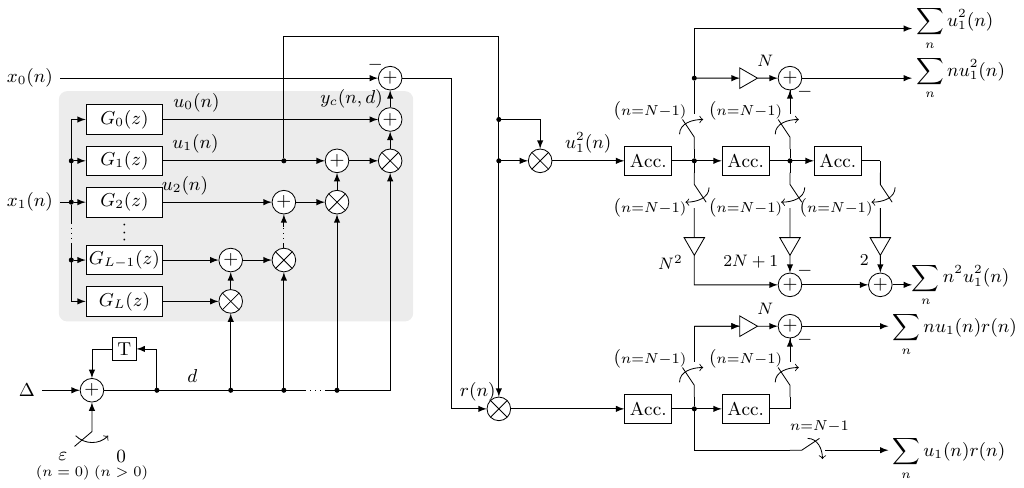}
	\caption{Implementation of the SFO compensator (highlighted with the gray rectangular) and computations to obtain the elements in \eqref{eq:Q} and \eqref{eq:c} required for the SFO estimator using the ILS algorithm.}\label{Flo:ILS_implem}
\end{figure*}

\section{Proposed SFO Estimator Using An Iterative Least-Squares Algorithm} \label{sec:ILS}
The estimation problem in \eqref{eq:argmin_cost} with the objective function in \eqref{eq:cost} involves the compensated signal $y_c(n,\Delta,\varepsilon)$ generated by the VFD filter. Here, the nonlinearity in the parameters $\Delta$ and $\varepsilon$ arises from the higher-order terms indexed by $k \geq 2$. 
As mentioned in Section~\ref{sec:comp_farrow_structure}, in typical fractional-delay applications, the delay parameter $d$ is bounded as $|d|{\leq}0.5$, which is the same as the assumption in \eqref{eq:max_d}, so that $(n\Delta+\varepsilon)^k$ in \eqref{eq:farrow_output} decreases in magnitude as $k$ increases. 

At the first iteration with the starting point in \eqref{eq:starting_point}, i.e., $d=0$, it is noted from Remark~\ref{rem:starting_point} that the subfilter outputs $u_k(n)$ for $k\geq3$ are not utilized for estimation using the Newton method. Further, at iterations $m\geq1$ for iterative-based optimization algorithms as in \eqref{eq:update} and with the assumption in \eqref{eq:max_d}, the updates ${\boldsymbol{\delta}}^{(m)}=[\delta_\Delta^{(m)}, \delta_\varepsilon^{(m)}]^\top$ are typically smaller than the true parameters $[\Delta_\text{true}, \varepsilon_\text{true}]$ provided that the starting point is close to the true solution and the gradient in \eqref{eq:gradient} is close to zero. This means that the difference between $(n(\Delta^{(m)}+\delta_\Delta^{(m)})+\varepsilon^{(m)}+\delta_\varepsilon^{(m)})^k$ and $(n\Delta^{(m)}+\varepsilon^{(m)})^k$ is very small for $k\geq2$. 

Motivated by the above mentioned aspects, we propose to linearize the nonlinear problem in \eqref{eq:argmin_cost} by using a first-order approximation model for computing the updates ${\boldsymbol{\delta}}^{(m)}$, i.e., assuming that only linear terms of the Farrow structure ($L=1$) are used for the estimation updates, while all the terms are used for compensation, i.e. computing the outputs $y_c(n,\Delta, \varepsilon)$  in \eqref{eq:farrow_output}. This approach allows to reduce the implementation complexity of the estimation comparing to the Newton based algorithm discussed in Section~\ref{sec:Newton} at the price of slightly less accurate steps at each iteration compare to the full-Farrow structure used for the estimation as will be shown in Section~\ref{sec:results}. 
Thus, at iteration $m$, let $(\Delta^{(m)},\varepsilon^{(m)})$ be the current parameter estimates and define the compensated output $y_c(n, \Delta^{(m)}, \varepsilon^{(m)})$ as in \eqref{eq:farrow_output}. Then, the corresponding residual is given by
\begin{align}
	r^{(m)} (n)&= y_c(n, \Delta^{(m)},\varepsilon^{(m)}) - x_0(n).
	\label{eq:residual_m}
\end{align}
 
Using the update rule in \eqref{eq:update}, the compensator output $y_c(n, \Delta^{(m+1)},\varepsilon^{(m+1)})$ at iteration $(m+1)$ can be expressed using the parameters $(\Delta^{(m)},\varepsilon^{(m)})$ from the previous iteration $m$, and thus the 
residual at iteration $(m+1)$ is
\begin{align}
	r^{(m+1)}(n) =&\sum_{k=0}^{L}\Big((n\Delta^{(m)}+\varepsilon^{(m)})-(n\delta_\Delta^{(m)}+\delta_\varepsilon^{(m)})\Big)^k\notag\\
	&\times u_k(n)-x_0(n) \notag\\
	=
	& \underbrace{\sum_{k=0}^{L}\Big((n\Delta^{(m)}+\varepsilon^{(m)})\Big)^k u_k(n)-x_0(n)}_{r^{(m)} (n)}\notag\\
	&-\Big((n\delta_\Delta^{(m)}+\delta_\varepsilon^{(m)})u_1(n) + e(n, \delta_{\Delta}^{(m)}, \delta_{\varepsilon}^{(m)})\Big),
	\label{eq:residual_m1}
\end{align}
where $e(n, \delta_{\Delta}^{(m)}, \delta_{\varepsilon}^{(m)})$ represents all higher order terms related to $\delta_\Delta^{(m)}$ and $\delta_\varepsilon^{(m)}$.
Assuming that their contribution to the overall residual error $r^{(m+1)}(n)$ is very small, \eqref{eq:residual_m1} can be approximated by
 \begin{align}
 	\tilde{r}^{(m+1)} (n)
 	&\approx r^{(m)} (n) - (n\delta_\Delta^{(m)}+\delta_\varepsilon^{(m)})u_1(n).
 	\label{eq:ILS_apprpox}
 \end{align}
 To obtain the parameters $\Delta$ and $\varepsilon$ using the  iterative update rule in \eqref{eq:update}, the residuals $\tilde{r}^{(m+1)}(n)$ must be minimized over $n=0,\dots, N-1$ at each iteration $m$. Thus, the updates $\boldsymbol{\delta}_\textbf{w}^{(m)}=[\delta_\Delta^{(m)},\delta_\varepsilon^{(m)}]^\top$ at iteration $m$ can be obtained in the least-squares sense
 \begin{align}
 	\min_{\delta_\Delta^{(m)},\delta_\varepsilon^{(m)}}
 	\sum_{n} \left(r^{(m)}(n) - (n\delta_\Delta^{(m)} + \delta_\varepsilon^{(m)})u_1(n) \right)^2. 
 	\label{eq:ILS_problem}
 \end{align}
  In matrix form, this can be written as 
 \begin{align}
 	\min_{\boldsymbol{\delta}_{\mathbf{w}}^{(m)}} 
 	\left\lVert \mathbf{r}^{(m)} - \mathbf{A}\,\boldsymbol{\delta}_{\mathbf{w}}^{(m)} \right\rVert^{2},
 \end{align}
 with
 \begin{align}
 	\mathbf{A}
 	&{=} 
 	\begin{bmatrix}
 		0 & u_1(0) \\
 		u_1(1) & u_1(1) \\
 		2u_1(2) & u_1(2) \\
 		\vdots       & \vdots   \\
 		(N-1) u_1(N-1) & u_1(N-1)
 	\end{bmatrix},	\label{eq:ILS_A}\\
 	\mathbf{r}^{(m)}
 	&{=}
 	\begin{bmatrix}
 		r^{(m)}(0) \,\,\,
 		r^{(m)}(1) \,\,\,
 		{\dots} \,\,\,
 		r^{(m)}(N-1)
 	\end{bmatrix}^{\top} \hspace{-4 pt},
	\label{eq:ILS_r}
 \end{align}
 and $\boldsymbol{\delta}_\textbf{w}^{(m)}$ as in \eqref{eq:update_delta}.
 
 Thus, the corresponding least-squares estimator gives the updates
 \begin{align}
 	\boldsymbol{\delta}_{\textbf{w}}^{(m)}
 	&=\Big(\textbf{A}^\top\textbf{A}\Big)^{-1}\textbf{A}^\top\mathbf{r}^{(m)}\notag\\
 		&={\textbf{Q}}^{-1} {\textbf{c}}^{(m)}\label{eq:ILS_update}
 \end{align}
 with 
 \begin{align}
 	{\textbf{Q}} &= 
 	\begin{bmatrix}
 		\displaystyle\sum_{n=0}^{N-1} n^2 u_1^2(n) & \displaystyle\sum_{n=0}^{N-1} n u_1^2(n)\\
 		\displaystyle\sum_{n=0}^{N-1} n u_1^2(n) &\displaystyle \sum_{n=0}^{N-1} u_1^2(n)
 	\end{bmatrix},\label{eq:Q}\\
 	\textbf{c}^{(m)} &= 
 	\begin{bmatrix}
 		\displaystyle\sum_{n=0}^{N-1} n u_1(n) r^{(m)}(n) & \displaystyle\sum_{n=0}^{N-1} u_1(n) r^{(m)}(n)
 	\end{bmatrix}^\top,\label{eq:c}
 \end{align}
given than $\mathbf Q{=}\mathbf A^\top\mathbf A$ is invertible, wich is formalized in the following theorem,
\begin{theorem} \label{teo:invertibility}
	$\mathbf{Q}=\mathbf{A}^\top \mathbf{A}$ is invertible if there exist 
	at least two sample indices $n_a \neq n_b$ such that $u_1(n_a) \neq 0$ and 
	$u_1(n_b) \neq 0$ for any integer $n_a, n_b\in[0, N-1]$.
\end{theorem}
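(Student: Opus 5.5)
Since $\mathbf{Q}=\mathbf{A}^\top\mathbf{A}$ is a Gram matrix, it is positive semidefinite, and it is invertible exactly when $\mathbf{A}$ in \eqref{eq:ILS_A} has full column rank $2$. So the plan is to show that the two columns of $\mathbf{A}$, namely $\mathbf{a}_1=[n\,u_1(n)]_{n=0}^{N-1}$ and $\mathbf{a}_2=[u_1(n)]_{n=0}^{N-1}$, are linearly independent whenever $u_1$ is nonzero at two distinct indices $n_a\neq n_b$. Equivalently, I will show $\mathbf{x}^\top\mathbf{Q}\mathbf{x}=\|\mathbf{A}\mathbf{x}\|^2>0$ for every nonzero $\mathbf{x}=[\alpha,\beta]^\top$. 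This gives positive definiteness, which is slightly stronger than invertibility and also supports the claim in Footnote~\ref{foo:Hessian_pd} that $\mathbf{Q}$ coincides with the Hessian for $L=1$ (Remark~\ref{rem:L1}).

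The key step is this. Suppose $\mathbf{A}\mathbf{x}=\mathbf{0}$. Componentwise, this means
\begin{align}
(\alpha n+\beta)\,u_1(n)=0 \quad\text{for all } n=0,\dots,N-1.
\end{align}
At $n=n_a$ and $n=n_b$ we have $u_1(n)\neq0$, so we can divide and get $\alpha n_a+\beta=0$ and $\alpha n_b+\beta=0$. Subtracting the two equations gives $\alpha(n_a-n_b)=0$, hence $\alpha=0$ since $n_a\neq n_b$, and then $\beta=0$. Thus $\ker\mathbf{A}=\{\mathbf{0}\}$, so $\mathbf{A}$ has full column rank, and $\mathbf{Q}$ is positive definite and therefore invertible. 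Put another way, the affine function $\alpha n+\beta$ cannot vanish at two distinct points unless it is identically zero.

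As a cross-check, I would also verify the result directly through the determinant. By the Cauchy--Binet (Lagrange) identity,
\begin{align}
\det\mathbf{Q}=\sum_{n}n^2u_1^2(n)\sum_{n}u_1^2(n)-\Big(\sum_{n}n\,u_1^2(n)\Big)^2=\sum_{i<j}(i-j)^2u_1^2(i)u_1^2(j).
\end{align}
Every term in this sum is nonnegative. The pair $(n_a,n_b)$ contributes a strictly positive term, so $\det\mathbf{Q}>0$. This version is also convenient for the division step in the $2\times2$ inverse used in the implementation.

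I do not expect a real obstacle here. The only care needed is the bookkeeping: the index $n=0$ zeroes the first entry of the first column of $\mathbf{A}$, and the argument must not assume either $n_a$ or $n_b$ is nonzero. The kernel argument and the determinant identity both handle this automatically. I would also note that the condition is necessary as well as sufficient: if $u_1$ is nonzero at only one index $n_0$, then $\mathbf{x}=[1,-n_0]^\top$ lies in the kernel of $\mathbf{A}$.
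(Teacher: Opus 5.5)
Your proposal is correct and follows the paper's own proof essentially step for step: $\mathbf{Q}$ is a positive semidefinite Gram matrix, and $\mathbf{A}\mathbf{v}=\mathbf{0}$ forces $\mathbf{v}=\mathbf{0}$ by dividing by the nonzero $u_1(n_a)$ and $u_1(n_b)$ and subtracting, so $\mathbf{A}$ has full column rank. Your Lagrange-identity formula $\det\mathbf{Q}=\sum_{i<j}(i-j)^2u_1^2(i)u_1^2(j)$ and your necessity remark are both valid additions that the paper does not include.
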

The proof of Theorem~1 is given in Appendix~\ref{app:ILS_invertibility}. It is important to note here that $\mathbf{Q}$ in \eqref{eq:Q} coincides  with $\mathbf{H}$ in \eqref{eq:Hessian} for $L{=}1$, as will be shown in Section~\ref{sec:one_branch} (see also Footnote~\ref{foo:Hessian_pd}).

\subsection{Low-Complexity Implementation}\label{sec:ILS_complexity}
\begin{table}[t!]
	\centering
	\renewcommand{\arraystretch}{1.2}
	\caption{Implementation Complexity of the Proposed SFO Estimation Algorithms per Block of $N$ Samples}
	\begin{tabular}{p{0.8cm}|P{1.45cm}|P{2.3cm}|P{1.6cm}|P{0.35cm}}
		\hline
		Method & Fixed mult. & General mult. &Additions & Div.\\
		\hline
		NM & \multirow{2}{*}{$(2L{-}2)N{+}5$}&\multirow{2}{*}{$\max\{L{+}1, 4\} N {+} 8$}&\multirow{2}{*}{$(2L{+}5) N{+}4$} & 1\\
		\vspace{-0.35cm}
		($L{\geq}2$) & & & \\
		NM & \multirow{2}{*}{$5$}&\multirow{2}{*}{$2N {+} 8$}&\multirow{2}{*}{$7N{+}4$} & 1\\
		\vspace{-0.35cm}
		($L{=}1$) & & & \\
		ILS & $5$ &$2N{+}8$&$7N{+}4$ & 1\\
		\hline
	\end{tabular}
	\label{tab:complexity}
	\vspace{0.2cm}
\end{table}
Similar to the Newton based algorithm discussed in Section~\ref{sec:Newton}, the implementation complexity of the ILS estimator consists of computing the elements of the matrices in \eqref{eq:Q} and \eqref{eq:c}, with the subsequent matrix inversion and the updates in \eqref{eq:update}. Thus, the implementation shown in Fig.~\ref{Flo:ILS_implem} requires $2{\times} N$ general multiplications (with two multiplications per sample), five fixed multiplications, and $2N {+} 5{\times}(N{-}1) {+}4$ additions (with $(N{-}1)$ additions required for each accumulator) for a block of $N$ samples. Additionally, eight general multiplications, three additions, and one division are required for the update rule in \eqref{eq:ILS_update}, and two additions are needed for the update in \eqref{eq:update}. The overall implementation complexities of the Newton based and ILS methods are summarized in Table~\ref{tab:complexity}. It follows that the complexity of the Newton based method depends on the degree $L$ of the VFD filter, while the ILS method complexity is the same regardless of the number of branches of the filter. Moreover, for the case when $L{=}1$, the complexities of both methods coincide. This special case will be discussed in Section~\ref{sec:one_branch}. 
Further, as for the Newton's based method, the number of samples required for estimation using the ILS method must be chosen within the bound in \eqref{eq:N_bounds}, and its influence on the estimation accuracy will be shown in Section~\ref{sec:ex_number_of_samples}.

It is worth noting that the matrix $\textbf{Q}$ in \eqref{eq:Q} does not depend on the parameters $\Delta$ and $\varepsilon$, i.e., it does not require to be recomputed after one iteration. This is different from the Newton based algorithm, where the elements of $\textbf{H}$ in \eqref{eq:Hessian} must be recomputed at each iteration based on $d$-value and thereby based on the updates $\boldsymbol{\delta}_\textbf{w}$.\footnote{Alternative Newton-type methods, such as the Gauss–Newton method \cite[Sec.~10.3, pp.~254-257]{nocedal2006numerical}, approximate the Hessian matrix and may reduce computational complexity. However, such variants are beyond the scope of this paper.} Therefore, if the estimation requires more than one iteration, the complexity savings of the ILS algorithm is even more compared to the Newton based method.

Compared to existing SFO estimation techniques, both of the proposed algorithms use only real-valued operations and their implementation complexity grows linearly with $N$. In contrast, the existing approaches typically process complex-valued signals and rely on specific waveform structures, that together limit applications and require greater number of arithmetic operations. 

Moreover, if frequency-domain SFO estimation methods are employed and the fast Fourier transform is required solely for estimation, the resulting implementation complexity scales as $N\log_2(N)$ complex multiplications.

\section{Numerical Examples} \label{sec:results}
\subsection{Accuracy of Proposed Estimators}
In this subsection, we evaluate the performance of the proposed methods by applying them to different types of band-limited signals. The objective is to synchronize two signals sampled at $f_1$ and $f_0$, where $f_0$ serves as the reference with a normalized sampling frequency of one. 

\example\label{ex:multisine} \textbf{Importance of jointly estimating SFO and STO}.
To illustrate the impact of the STO parameter $\varepsilon$ on the accuracy of the SFO estimation, we consider a real-valued multisine signal whose amplitudes and phases are selected from random 16-QAM symbols. The system is initialized with a relatively large offset $\varepsilon=-0.2$ and $\Delta=400$~ppm, corresponding to $|d(n)|\le 0.21$ when using $N=1024$ samples for the estimation since $n\in[0,N-1]$.
The signal-to-noise ratio (SNR) is set to $30$~dB. This configuration is deliberately chosen to emphasize the importance of jointly estimating $\Delta$ and $\varepsilon$.

When the estimators are applied with one iteration while neglecting the STO parameter, i.e., estimating only $\Delta$, the obtained estimates are $\widehat{\Delta}=101$~ppm for the ILS approach and $\widehat{\Delta}=103$~ppm for Newton’s method, resulting in an NMSE of approximately $2.7\times 10^{-2}$ for both estimators. This pronounced underestimation occurs because the STO-induced delay is not modelled; consequently, its effect is absorbed into the SFO parameter, leading to a strongly biased estimate of $\Delta$. Additional iterations provide only marginal refinement (below $3$~ppm in total) and therefore do not compensate for the fundamental bias introduced by neglecting $\varepsilon$.

In contrast, when both parameters are jointly estimated, the estimators yield $\widehat{\Delta}=400$~ppm and $\widehat{\varepsilon}=-0.201$~ppm for the ILS approach, and $\widehat{\Delta}=401$~ppm and $\widehat{\varepsilon}=-0.201$~ppm for Newton’s method, with an NMSE of $1.9{\times} 10^{-3}$ for both methods already after a single iteration.

Although a residual STO could in principle be mitigated by a subsequent equalization stage, an erroneous SFO estimate cannot be corrected afterward. Hence, accurate estimation of $\Delta$ fundamentally requires including $\varepsilon$ in the joint estimation, especially when $\varepsilon \gg \Delta$.  In the remainder of the examples, both parameters are therefore estimated simultaneously.

Figure~\ref{Flo:multisine} shows the reference signal $x_0(n)$, the SFO-affected signal $x_1(n)$, and the compensated signals obtained with the ILS method $y_{\text{ILS}}(n)$ and Newton’s method $y_{\text{NM}}(n)$. In both cases, the compensated signals closely align with the reference, confirming the necessity and effectiveness of the joint estimation approach.

\begin{figure}[tbp]
	\centering
	\centering \includegraphics[scale=1.0]{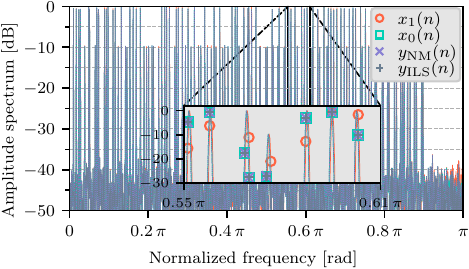}
	\caption{\textbf{\textit{Example~\ref{ex:multisine}:} Importance of jointly estimating (SFO and STO)}. Amplitude spectrum before and after compensation for a multisine signal.}
	\label{Flo:multisine}
\end{figure}

\example \label{ex:mcperformance} \textbf{Monte Carlo and convergence analyses.} To complement the representative single-realization example above, a statistical performance evaluation is conducted. Table~\ref{tab:performance} reports the averaged NMSE
\begin{align}
	\overline{\text{NMSE}}=\frac{1}{N_{\mathrm{MC}}}\sum_{i=1}^{N_{\mathrm{MC}}}\text{NMSE}_i,
\end{align}
and its standard deviation 
\begin{align}
	\sigma \triangleq \sqrt{\frac{1}{N_{\mathrm{MC}}}\sum_{i=1}^{N_{\mathrm{MC}}} \left(\text{NMSE}_i - \overline{\text{NMSE}}\right)^2},
\end{align}
computed over $N_{\mathrm{MC}}{=}1000$ realizations for multisine signals and bandpass-filtered white-noise signals across several SNR values, with true offset  parameters $\Delta{=}300$~ppm and $\varepsilon{=}300$~ppm. Results are shown after one and two iterations ($m{=}1{\mid}2$) for both estimators.

\begin{table*}[ht]
	\centering
	\caption{\textbf{\textit{Example}~\ref{ex:mcperformance}. MC performance. }Evaluation of the estimators over $N_{MC}{=}1000$ realizations, evaluated after $m=1\mid 2$ iterations for different SNR values.}
	\begin{tabular}{@{}c cc cc@{}}
		\toprule
		\multirow{2}{*}{SNR (dB)} & \multicolumn{2}{c}{NM}  & \multicolumn{2}{c}{ILS} \\
		\cmidrule(lr){2-3}\cmidrule(lr){4-5}
		& $\overline{\text{NMSE}}$ ($m{=}1\mid2$) & $\sigma$ ($m{=}1\mid2$) & $\overline{\text{NMSE}}$ ($m{=}1\mid2$) & $\sigma$ ($m{=}1\mid2$) \\
		\midrule
		\multicolumn{5}{c}{\textit{Multisine}}\\
		\midrule
		20 & $2.063\times 10^{-2}\mid1.964\times 10^{-2}$ & $8.933\times 10^{-4}\mid8.666\times 10^{-4}$ & $1.992\times 10^{-2}\mid1.968\times 10^{-2}$ & $8.789\times 10^{-4}\mid8.703\times 10^{-4}$ \\
		30 & $2.966\times 10^{-3}\mid1.982\times 10^{-3}$ & $2.102\times 10^{-4}\mid8.696\times 10^{-5}$ & $2.172\times 10^{-3}\mid1.987\times 10^{-3}$ & $9.588\times 10^{-5}\mid8.717\times 10^{-5}$ \\
		\rowcolor{gray!15} 40 & $1.181\times 10^{-3}\mid1.991\times 10^{-4}$ & $1.890\times 10^{-4}\mid8.976\times 10^{-6}$ & $3.801\times 10^{-4}\mid2.019\times 10^{-4}$ & $3.445\times 10^{-5}\mid8.949\times 10^{-6}$ \\
		\midrule
		\multicolumn{5}{c}{\textit{Bandpass filtered white noise}}\\
		\midrule
		20 & $2.028\times 10^{-2}\mid1.970\times 10^{-2}$ & $1.192\times 10^{-3}\mid1.186\times 10^{-3}$ & $1.990\times 10^{-2}\mid1.973\times 10^{-2}$ & $1.197\times 10^{-3}\mid1.190\times 10^{-3}$ \\
		30 & $2.557\times 10^{-3}\mid1.988\times 10^{-3}$ & $1.612\times 10^{-4}\mid1.197\times 10^{-4}$ & $2.109\times 10^{-3}\mid1.990\times 10^{-3}$ & $1.228\times 10^{-4}\mid1.198\times 10^{-4}$ \\
		40 & $7.685\times 10^{-4}\mid1.994\times 10^{-4}$ & $1.091\times 10^{-4}\mid1.207\times 10^{-5}$ & $3.135\times 10^{-4}\mid2.005\times 10^{-4}$ & $2.360\times 10^{-5}\mid1.204\times 10^{-5}$ \\
		\bottomrule
		\vspace{-2mm}
	\end{tabular}
	\label{tab:performance}
	\begin{minipage}{\linewidth}
		\footnotesize\raggedright
		\emph{Note:} At low SNR the NMSE difference between $m{=}1$ and $m{=}2$ remains small, while at higher SNR the benefit of the second iteration becomes apparent.
	\end{minipage}
\end{table*}

At low SNR (20~dB), the NMSE reduction obtained when moving from $m=1$ to $m=2$ is small for both estimators and both signal types, indicating that the estimation accuracy is primarily limited by noise in this regime. As the SNR increases, the influence of the iterative refinement becomes more significant, and at 40~dB a clear reduction in NMSE is observed when performing the second iteration. These results further reveal a consistent pattern between the estimators: after the first iteration, ILS achieves the lowest NMSE, whereas after the second iteration NM attains a marginally smaller estimation error. The same tendency is observed in the other examples throughout the paper.

To visualize this effect, Fig.~\ref{Flo:delta_eps_iter} illustrates the convergence of the SFO and STO estimates versus iteration index $m$ for the multisine case at $\mathrm{SNR}=40$~dB, where it is seen that the dominant error reduction occurs within the first two iterations, while subsequent updates provide only minor refinement.

\begin{figure}[tbp]
	\centering
	\centering \includegraphics[scale=1.0]{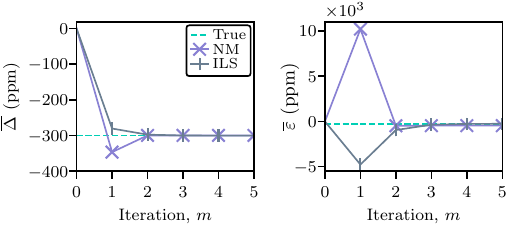}
	\caption{\textbf{\textit{Example \ref{ex:mcperformance}:} Convergence analysis. }SFO and STO precision against iteration count.}
	\label{Flo:delta_eps_iter}
\end{figure}

\example\label{ex:qam64snr}\label{ex:qam64delta}
\textbf{Sensitivity to noise and SFO value.} In this example, the estimation performance is evaluated for OFDM signals employing 64-QAM modulation with $1536$ active subcarriers out of $2048$. Two complementary aspects are analyzed: the sensitivity to noise and the robustness with respect to the SFO parameter.

First, the sensitivity to noise is analyzed. Figure~\ref{Flo:MSE} shows the NMSE and bit-error rate (BER) versus SNR for $N_{MC}{=}10\,000$ OFDM signals with fixed offsets $\Delta{=}293$~ppm and $\varepsilon{=}293$~ppm per SNR value (corresponding to ${\mid} d(n) {\mid} {\le} 0.3$). 

For both methods, at low-to-moderate SNR, the performance is dominated by noise, and a single iteration yields nearly identical NMSE and BER results as multiple iterations.\footnote{At low SNRs (e.g., below $15$~dB), the NMSE remains essentially unchanged regardless of the number of iterations, consistent with a noise-limited regime, whereas the BER continues to improve down to approximately $5$~dB.} At higher SNR values, however, the estimators become iteration limited. In particular, for SNRs above approximately $50$~dB, a single iteration results in a visible NMSE floor, whereas two iterations continue to reduce the estimation error. For SNR values around $30$~dB, the BER drops below $10^{-6}$ for ILS with two iterations, and for Newton's method with one or two iterations. In contrast, restricting ILS to a single iteration yields an iteration limited BER floor at intermediate SNRs (approximately $10^{-5}$ at $30$~dB and $10^{-6}$ at $40$~dB). 
Similar behavior was observed for the same example with 16-QAM modulation, where a second ILS iteration improves the NMSE only at high SNRs (above $45$ dB), while the BER is already below $10^{-6}$ for SNRs exceeding $25$ dB.

\begin{figure}[tbp]
	\centering
	\centering \includegraphics[scale=1.0]{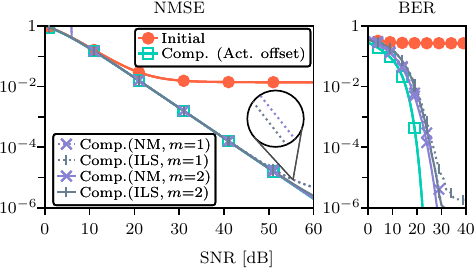}
	\caption{\textbf{\textit{Example \ref{ex:qam64snr}: }Sensitivity to noise.} NMSE versus SNR using the proposed estimators for OFDM signals.}
	\label{Flo:MSE}
\end{figure}

Next, the robustness with respect to the SFO parameter itself is evaluated. Figure~\ref{Flo:MSE_delta} shows the NMSE versus $\Delta$ for an SNR of $25$~dB, with fixed $\varepsilon{=}300$~ppm and varying $\Delta$. The results, averaged over $1\,000$ signals per $\Delta$ value, indicate that the estimation accuracy is insensitive to $\Delta$ within the considered range.\\

\begin{figure}[tbp]
	\centering
	\centering \includegraphics[scale=1.0]{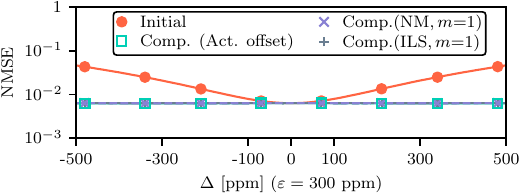}
	\caption{\textbf{\textit{Example \ref{ex:qam64delta}:} Sensitivity to SFO. }NMSE versus $\Delta$ using the proposed estimators for OFDM signals.}
	\label{Flo:MSE_delta}
\end{figure}

\begin{figure*}[tbp]
	\centering
	\includegraphics[width=\textwidth]{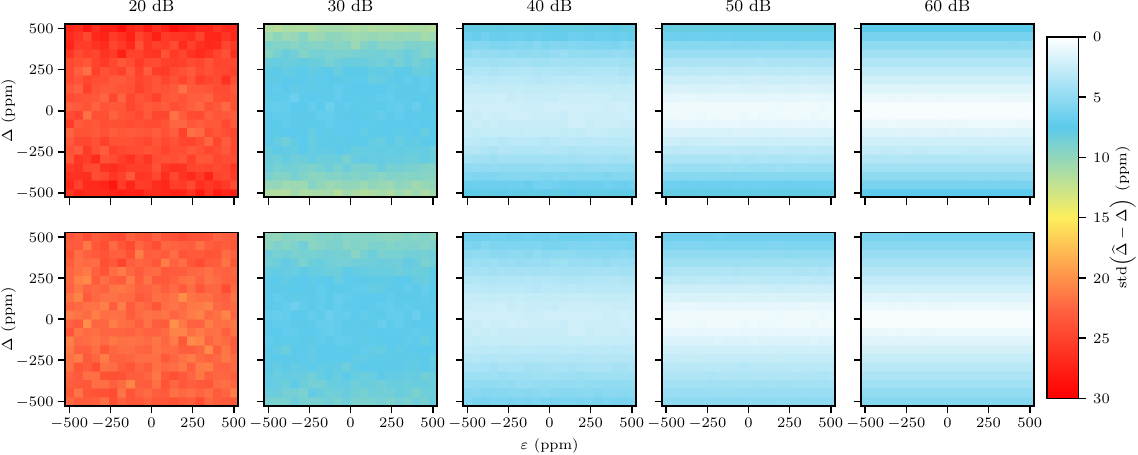}
	\caption{\textbf{\textit{Example~\ref{ex:qam64grid}: SFO estimation variability over the two-dimensional offset range. }}Standard deviation of the SFO estimation error computed over $N_{MC}=1000$ Monte-Carlo realizations, over a two-dimensional $(\Delta,\varepsilon)$ grid after a single iteration ($m{=}1$), for multiple SNR values. Top row: Newton-based estimator; bottom row: ILS estimator.}
	\label{Flo:grid}
\end{figure*}
\example\label{ex:qam64grid}
\textbf{SFO estimation variability over the two-dimensional offset range.}
To evaluate the estimation accuracy over a wide operating range, a two-dimensional grid experiment is conducted. The OFDM parameters are $1536$ active subcarriers out of $2048$, 16-QAM modulation, and $N_{\mathrm{MC}}=1000$ Monte-Carlo realizations per grid point. The grid consists of $20$ uniformly spaced values for each parameter, with $\Delta,\varepsilon \in [-500,500]~\mathrm{ppm}$, yielding $400$ distinct parameter pairs $(\Delta,\varepsilon)$. The estimation window length is fixed to $N=1000$, ensuring that $|d(n)|\le 0.5$ 
samples for all $n$ when $|\Delta|,|\varepsilon|\le500~\mathrm{ppm}$.

For each $(\Delta,\varepsilon)$ pair, both estimators are identically initialized and evaluated after a single iteration ($m{=}1$), and performance is quantified by the standard deviation of the SFO estimation error,
\begin{equation}
	\sigma_{\Delta}(\Delta,\varepsilon)
	\triangleq
	\sqrt{\frac{1}{N_{\mathrm{MC}}}
		\sum_{k=1}^{N_{\mathrm{MC}}}
		\bigl(\widehat{\Delta}^{(k)}-\Delta\bigr)^2},
\end{equation}
across all realizations, where $\widehat{\Delta}^{(k)}$ denotes the estimate obtained in the $k$-th realization.

The resulting values of $\sigma_{\Delta}(\Delta,\varepsilon)$ are reported as heatmaps in Fig.~\ref{Flo:grid}, illustrating the variability of the SFO estimation error over the $(\Delta,\varepsilon)$ grid for different SNR values and for both estimators. Both estimators exhibit very similar behavior across the grid. At low SNR ($20$~dB), the standard deviation of the SFO estimation error remains bounded within approximately $20$--$30$~ppm, which is more than one order of magnitude smaller than the total offset span. The ILS estimator exhibits a slightly lower deviation over parts of the grid.

As the SNR increases, the standard deviation of the SFO estimation error decreases significantly. In addition, the variability becomes more uniform across the grid, indicating reduced sensitivity to the specific offset pair $(\Delta,\varepsilon)$. At $30$~dB, the deviation is around $10$~ppm, and for SNRs of $40$~dB and above it falls below approximately $5$~ppm. Thus, both estimators provide consistent SFO estimation across the full offset range, with reduced variability at higher SNR.


\example\label{ex:fbnoise} 
\textbf{Operation at the edge of the Farrow design boundary.}
Next, the estimator is stressed by operating beyond the Farrow interpolator design range. The true parameters are $\Delta=450$~ppm and $\varepsilon=0.05$, which for $N{=}1024$ samples imply that $|d(n)|>0.5$ for some $n\in[0,N-1]$, i.e., the design bound is slightly violated near the end of the estimation window.\footnote{Since $N$ must be selected \emph{a priori}, the induced fractional delay ${\mid} d(n) {\mid}$ may slightly exceed the nominal design range in practice, depending on the true offset parameters.} A bandpass-filtered white noise signal with an SNR of $20$~dB is considered.

After one iteration, the estimators yield $\widehat{\Delta}=446.83$~ppm and $\widehat{\varepsilon}=0.0460$ for the ILS approach, and $\widehat{\Delta}=461.53$~ppm and $\widehat{\varepsilon}=0.0441$ for Newton’s method, both close to the true values. 
Despite violating the nominal validity region of the Farrow structure, estimation accuracy is maintained, demonstrating robustness of the proposed estimators, as most samples satisfy $|d(n)| \le 0.5$. Fig.~\ref{Flo:fbnoise} shows the corresponding spectrum before and after compensation.

\begin{figure}[tbp]
	\centering
	\centering \includegraphics[scale=1.0]{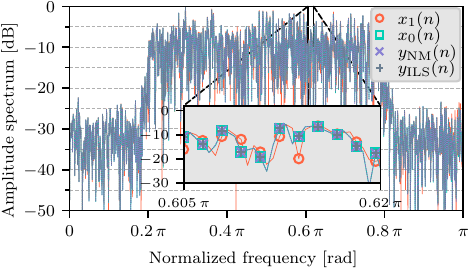}
	\caption{\textbf{\textit{Example \ref{ex:fbnoise}: Operation at the edge of the Farrow design boundary. }}Amplitude spectrum before and after compensation for a bandpass-filtered white noise signal.}
	\label{Flo:fbnoise}
\end{figure}

\example\label{ex:qam64cfocpo} 
\textbf{Robustness to other impairments (CFO and PO).}
To illustrate the robustness of the estimators in the presence of additional impairments, $5\%$ CFO and PO are introduced in an OFDM signal with $1\,536$ active subcarriers out of $2\,048$ available Nyquist subcarriers, using \mbox{64-QAM} modulation and an SNR of $30$~dB. The true SFO parameters are $\Delta{=}-300$~ppm and $\varepsilon{=}-500$~ppm, with $N{=}1024$ samples employed for the estimation (corresponding to ${\mid} d {\mid} \leq 0.31$). Using only the real part of the received signal for estimation, after one iteration, the ILS method yields $\widehat{\Delta}{=}-291$~ppm and $\widehat{\varepsilon}{=}-433$~ppm, while Newton’s method provides $\widehat{\Delta}{=}{-}303$~ppm and $\widehat{\varepsilon}{=}-2\,054$~ppm.\footnote{After the first iteration, Newton’s based algorithm misestimates $\varepsilon$, but achieves a similar NMSE as ILS method due to the accurate update of $\Delta$. After the second iteration, Newton refines the estimate to $\widehat{\varepsilon}{=}-558$~ppm.} Both estimators achieve the same compensation quality, over the entire signal (i.e., accounting for all subcarriers, before CFO/PO compensation) with a NMSE of $1.915{\times}10^{-3}$ (true-compensation reference: $1.913\times10^{-3}$).

Figure~\ref{Flo:constellation} shows the constellation diagrams of the uncompensated signal, after SFO compensation, and after subsequent CFO/PO compensation. In both cases, the final compensated constellations align well with the ideal 64-QAM grid, demonstrating effective SFO estimation and compensation even in the presence of multiple impairments. 

\begin{figure}[tbp]
	\centering
	\centering \includegraphics[scale=1.0]{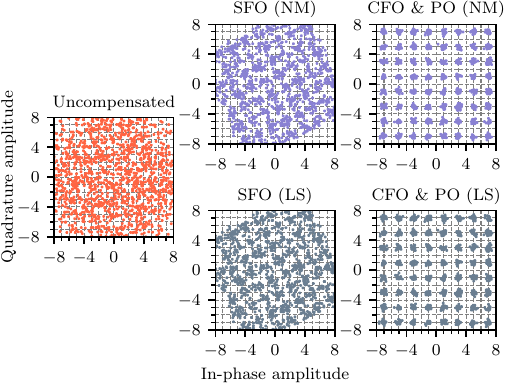}
	\caption{\textbf{\textit{Example \ref{ex:qam64cfocpo}: Robustness to other impairments (CFO and PO). }} Uncompensated signal (left), after SFO compensation (center), and after subsequent CFO/PO compensation (right).}
	\label{Flo:constellation}
\end{figure}

\subsection{VFD Filter Complexity Influence on Estimation Accuracy} \label{sec:complexity_estimation_accuracy}
As mentioned in Section~\ref{sec:compensation}, the implementation complexity of the Farrow structure is determined by the desired approximation error and the cut-off frequency.
In this subsection, we investigate how the VFD filter approximation error, and thereby its complexity, influences the estimation error.
\begin{figure*}[tbp]
	\centering \includegraphics[clip,width=\linewidth]{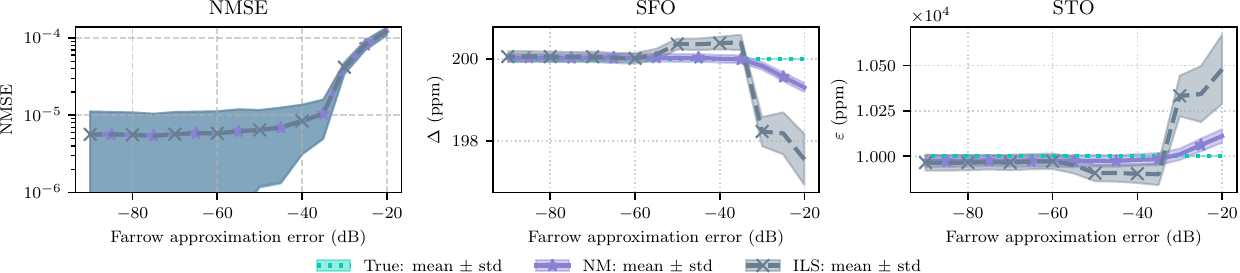}
	\caption{\textbf{\textit{Example \ref{ex:approx_error}:} Estimation accuracy vs. Farrow approximation error.} A bandpass filtered white noise signal is considered with the true offsets values being $\Delta=200$ ppm and $\varepsilon=0.01$. The curves represent the mean and standards deviation (std.) for each parameter.}
	\label{Flo:ex2_Farrow_approx_noise_param1}
\end{figure*}

\example \label{ex:approx_error} \textbf{Estimation accuracy vs. Farrow approximation error.}
In this example, the VFD filters are designed for the range of approximation errors from $-95$~dB to $-20$~dB and $\omega_c T=0.9\pi$, i.e., covering $90\%$ of the bandwidth, using the minimax based design approach \cite{johansson2003}. The parameters for the so obtained filters are listed in Table~\ref{tab:ex2_Farrow_parameters}.

Considering a real-valued filtered white noise signal, the parameters SFO and STO are estimated using the proposed methods. To investigate the practical limit of the proposed estimation methods due to the Farrow filter approximation error, we consider the case without additive noise. For $N_{\mathrm{MC}}{=}1000$ signals and $N{=}1024$ samples, the values of the estimated parameters are fixed after three iterations with the tolerance $10^{-8}$. Simulation results using different Farrow filters with parameters stated in Table~\ref{tab:ex2_Farrow_parameters} and for $\Delta=200$ ppm and $\varepsilon=0.01$  are shown in Fig.~\ref{Flo:ex2_Farrow_approx_noise_param1}. It is clearly seen that the NMSE remains essentially unchanged for the Farrow approximation errors below $-50$~dB, and increases only slightly within the range $[-50, -35]$~dB for both proposed estimation methods. Moreover, the mean trajectories of the proposed estimators, together with the regions corresponding to their standard deviations, are nearly indistinguishable from the NMSE envelope obtained when compensation is performed using the true parameter values (green lines). 
Further, for the region $[-35, -20]$~dB, the NMSE level rises significantly with the increase of the approximation error, regardless of the estimation algorithms. This degradation is also affected by the compensator itself. Although both $\varepsilon$ and $\Delta$ parameters are estimated with some deviation from the true values, the difference is more pronounced for the STO values, which indicates that SFO is less susceptible to the approximation error introduced by the Farrow filter. 
Additionally, one can observe that the ILS curves for the SFO and STO estimates are less smooth compared to the Newton method. This  behavior reveals that approximation of the higher-order terms in \eqref{eq:ILS_apprpox} results in a slightly biased LS estimation of the updates $\mathbf{\delta}_\mathbf{w}^{(m)}=[\delta_\Delta^{(m)}, \delta_\varepsilon^{(m)}]$.

\begin{table}[tbp]
	\centering
	\renewcommand{\arraystretch}{1.1}
	\caption{Approximation Error and Parameters of the Farrow Structure for $\omega_cT=0.9\pi$ ({\normalfont\textit{Example \ref{ex:approx_error}}})}
	\begin{tabular}{P{2.3cm}|P{0.8cm}|P{0.8cm}}
		\hline
		Approx. error, dB & $L$ &$N_G$\\
		\hline
		$-95$ & 7 & 62\\
		$-90$ & 7 &58\\
		$-85$ & 6 & 58\\
		$-80$ & 6 & 52\\
		$-75$ & 6 & 48\\
		$-70$ & 6 & 44\\
		$-65$ & 5 & 42\\
		$-60$ & 5 & 38\\
		$-55$ & 5 & 34\\
		$-50$ & 4 & 36\\
		$-45$ & 4 & 30\\
		$-40$ & 4 & 23\\
		$-35$ & 4 & 22\\
		$-30$ & 3 & 18\\
		$-25$ & 3 & 14\\
		$-20$ & 3 & 12\\
		\hline
	\end{tabular}
	\label{tab:ex2_Farrow_parameters}
\end{table}
\begin{figure}[t!]
	\centering \includegraphics[clip,width=\linewidth]{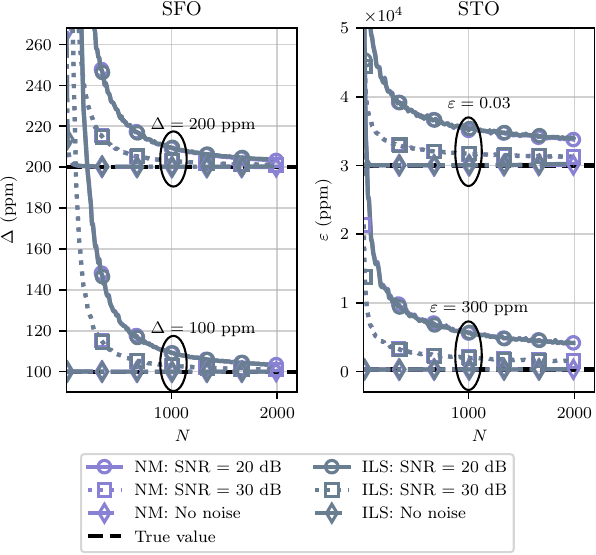}
	\caption{\textbf{\textit{Example \ref{ex:num_samples}: } Estimation accuracy vs. number of samples.} Standard deviation of the estimated parameters when a bandpass filtered white noise signal is considered. Two sets of parameter are estimated: upper curves are for ($\Delta=200$ ppm, $\varepsilon=0.03$), lower curves are for ($\Delta=100$ ppm, $\varepsilon=300$ ppm).}
	\label{Flo:ex3_batch_length_noise_param1}
\end{figure}

\subsection{Number of Samples Required for the Estimation}\label{sec:ex_number_of_samples}
As mentioned in Section~\ref{sec:complexity}, the number of samples $N$ used for estimation using the proposed algorithms needs to be chosen within the interval in \eqref{eq:N_bounds} and satisfy the trade-off between low implementation complexity and rapid convergence of the algorithms. Thus, in this subsection, the influence of $N$ on the estimation accuracy of the SFO and STO is evaluated.

\example\label{ex:num_samples} \textbf{Estimation accuracy vs. number of samples.}
Considering a real-valued bandpass filtered white-noise signal, the parameters SFO and STO are estimated using the proposed methods and different signal length $N$. Here, three cases with additive noise are evaluated: $\text{SNR}=20$~dB, $\text{SNR}=30$~dB, and the case without noise. The values of the estimated parameters are captured after two iterations. Two sets of parameters ($\Delta=200$~ ppm,~$\varepsilon=0.03$) and ($\Delta=100$~ ppm,~$\varepsilon=300$~ppm) are estimated and the standard deviation of the parameters are compared. The results are shown in Fig.~\ref{Flo:ex3_batch_length_noise_param1}, where the standard deviations are obtained based on $N_{\mathrm{MC}}{=}1000$ simulations for each length $N$ and plotted as a sum of the true parameter values and the standard deviation. One can see that the accuracy of the estimation is determined by the SNR value as expected, specifically, the higher SNR, the greater number of samples needs to be used for estimation to ensure a sufficiently low error level. Moreover, the SFO estimates are less susceptible to the additive noise compared to the STO for both methods.

\section{Proposed Estimators Using a Simplified Farrow Structure (\texorpdfstring{$L=1$}{L=1})}
\label{sec:one_branch}
To further reduce the implementation complexity of the estimation and eliminate the need for solving the nonlinear problem in \eqref{eq:argmin_cost} with the objective function in \eqref{eq:cost}, we linearize the problem by using a simplified Farrow filter with $L{=}1$ (first-degree VFD filter) for both estimation and compensation (i.e., only the branches $k=0,1$). 

In this case, the interpolated signal at the output of the simplified Farrow becomes
\begin{equation}
	y_c(n) = u_0(n) + (n\Delta + \varepsilon)u_1(n),
\end{equation}
where $u_0(n) = x_1(n) \ast g_0(n)$ and $u_1(n) = x_1(n) \ast g_1(n)$. The cost function in \eqref{eq:cost} reduces to 
\begin{equation}
	F(\Delta,\varepsilon) = \frac{1}{2}\sum_{n=1}^{M} \left[u_0(n) + (n\Delta+\varepsilon)u_1(n) - x_0(n)\right]^2.
	\label{eq:quadratic_cost}
\end{equation}

The gradient, $\mathbf{g}{=}\nabla F(\Delta,\varepsilon)$, can be written component-wise as
\allowdisplaybreaks
\begin{align}
	\frac{\partial F}{\partial \Delta} &= \sum_{n=1}^{M} \left[u_0(n) + (n\Delta+\varepsilon)u_1(n) - x_0(n)\right] \cdot n u_1(n) \notag \\
	&= \Delta\sum_{n} n^2 u_1^2(n) + \varepsilon\sum_{n} n u_1^2(n) \notag\\ 
	&+ \sum_{n} n u_1(n)\big(u_0(n)-x_0(n)\big), \\
	\frac{\partial F}{\partial \varepsilon} &= \sum_{n=1}^{M} \left[u_0(n) + (n\Delta+\varepsilon)u_1(n) - x_0(n)\right] \cdot u_1(n) \notag \\
	&= \Delta\sum_{n} n u_1^2(n) + \varepsilon\sum_{n} u_1^2(n) \notag\\
	&+ \sum_{n} u_1(n)\big(u_0(n)-x_0(n)\big),
\end{align}
which can be rewritten in matrix form as
\begin{align}
	\mathbf{g} = \underbrace{\begin{bmatrix}
			\displaystyle\sum_{n=0}^{N-1} n^2 u_1^2(n) & \displaystyle\sum_{n=0}^{N-1} n u_1^2(n) \\[8pt]
			\displaystyle\sum_{n=0}^{N-1} n u_1^2(n) & \displaystyle\sum_{n=0}^{N-1} u_1^2(n)
	\end{bmatrix}}_{\mathbf{H}}\begin{bmatrix}\Delta \\ \varepsilon\end{bmatrix} \nonumber \\ 
	+ 
	\underbrace{\begin{bmatrix}
			\displaystyle\sum_{n=0}^{N-1} n u_1(n)\big(u_0(n)-x_0(n)\big) \\[8pt]
			\displaystyle\sum_{n=0}^{N-1} u_1(n)\big(u_0(n)-x_0(n)\big)
	\end{bmatrix}}_{\mathbf{b}}.
	\label{eq:linear_gradient}
\end{align}
\paragraph{Simplified Newton's based estimator} \label{sec:simp_N} Noting from \eqref{eq:quadratic_cost}--\eqref{eq:linear_gradient} that $\mathbf{H}$ is exactly the Hessian of $F(\Delta,\varepsilon)$ and it is constant (independent of $\Delta,\varepsilon$), starting from any iterate $(\Delta^{(m)}, \varepsilon^{(m)})$, the Newton update rule in \eqref{eq:update} gives
\begin{align}
\begin{bmatrix}\Delta^{(m+1)} \\ \varepsilon^{(m+1)}\end{bmatrix} &= 
\begin{bmatrix}\Delta^{(m)} \\ \varepsilon^{(m)}\end{bmatrix} - 
\mathbf{H}^{-1}\underbrace{\left(\mathbf{H}\begin{bmatrix}\Delta^{(m)} \\ \varepsilon^{(m)}\end{bmatrix} + \mathbf{b}\right)}_{\mathbf{g}^{(m)}} \nonumber \\
&= -\mathbf{H}^{-1}\mathbf{b}, \label{eq:NS_onebranch}
\end{align}
regardless of the initial estimate.

\paragraph{Simplified ILS estimator} Applying the ILS method described in Section~\ref{sec:ILS} to the first-degree VFD filter, the cost function in \eqref{eq:quadratic_cost} can be rewritten in matrix form as
\begin{align}
	F(\Delta,\varepsilon){=}\tfrac12\|\mathbf{A}\boldsymbol{\theta}{-}\mathbf{c}_1\|_2^2
	\label{eq:ILS_cost_1}
\end{align} 
with $\mathbf{A}$ as in \eqref{eq:ILS_A} and
\begin{align}
\mathbf{c}_1 &= \begin{bmatrix}
x_0(0)-u_0(0) & \cdots & x_0(N{-}1)-u_0(N{-}1)
\end{bmatrix}^{\top},\label{eq:c_LS}
\end{align}
where the normal equations are $\mathbf{A}^\top\mathbf{A}\boldsymbol{\theta} = \mathbf{A}^\top\mathbf{c}_1$.

It is noted from \eqref{eq:ILS_update} and \eqref{eq:Q} that $\mathbf{A}^\top\mathbf{A} = \mathbf{Q}$, and for the case when $L{=}1$, $\mathbf{Q}$ coincides with $\mathbf{H}$ in \eqref{eq:linear_gradient}.
Further, from \eqref{eq:ILS_A}, \eqref{eq:linear_gradient}, and \eqref{eq:c_LS}, \mbox{$\mathbf{A}^\top\mathbf{c}_1 = -\mathbf{b}$}. It follows that the normal equation for \eqref{eq:ILS_cost_1} can be rewritten as \mbox{$\mathbf{H}\boldsymbol{\theta} = -\mathbf{b}$}, with \mbox{$\boldsymbol{\theta} = -\mathbf{H}^{-1}\mathbf{b}$}. Thus the ILS solution is obtained directly in a single step (i.e., LS) and coincides with the Newton solution in \eqref{eq:NS_onebranch}, i.e.,
\begin{equation}
	\boldsymbol{\theta}^* = -\mathbf{H}^{-1}\mathbf{b}.
	\label{eq:one_branch_solution}
\end{equation}
This explains why the implementation complexities presented in Table~\ref{tab:complexity} for both methods are the same for $L{=}1$.

\begin{remark}[Finite-step convergence for the one-branch case]
	For the quadratic cost in \eqref{eq:quadratic_cost}, the Hessian $\mathbf{H}$ is constant and positive definite.
	As a result, both Newton’s method and the ILS method converge in a single step to the unique minimizer
	$\boldsymbol{\theta}^* {=} -\mathbf{H}^{-1}\mathbf{b}$ with zero estimation error\footnote{Here, zero estimation error refers to the optimization error of the quadratic cost function. It does not account for modeling errors introduced by the Farrow approximation error or for noise-induced errors.},
	i.e., $\|\boldsymbol{\theta}^{(1)} {-} \boldsymbol{\theta}^*\| {=} 0$, independently of the initialization [i.e., the choice of $\boldsymbol{\theta}^{(0)}$]. 
	This finite-step convergence holds only for the first-degree Farrow model; higher-order branches lead to a nonlinear cost function and require iterative convergence.
\end{remark}

\subsection{Numerical Example: First-Degree Farrow Based Estimator}
\example\label{ex:qam64cfocpo1branch}
\textbf{Robustness to other impairments (CFO and PO) for the simplified estimators.}
We now consider the same scenario as in Example~\ref{ex:qam64cfocpo}, but using a simplified Farrow filter (i.e., $L{=}1$). 
In this case, the estimators are linear in the unknown parameters as in \eqref{eq:one_branch_solution}. Consequently, both methods yield identical parameter estimates, $\widehat{\Delta}=-291$~ppm and $\widehat{\varepsilon}=-433$~ppm, with a normalized mean-squared error of $\mathrm{NMSE}=2.127\times10^{-3}$. These values coincide with the estimates obtained by the ILS method after one iteration, corresponding to the initial choices $\Delta^{(0)}{=}0$ and $\varepsilon^{(0)}{=}0$ (cf. Example~\ref{ex:qam64cfocpo}).

Figure~\ref{Flo:constellation1branch} shows the constellation diagrams obtained with the first-degree estimator. Compared to the uncompensated signal, a clear improvement is observed. However, residual distortion remains, which is consistent with the slightly higher NMSE relative to the full Farrow-based estimator (i.e. $1.915{\times}10^{-3}$, also c.f. Fig.~\ref{Flo:constellation}). At this point, the estimation error is no longer the dominant performance-limiting factor; instead, the residual distortion is governed by the approximation accuracy of the interpolation itself, i.e., the first-degree VFD filter. Nevertheless, as proved by \eqref{eq:NS_onebranch} and \eqref{eq:one_branch_solution}, both estimators achieve the same performance under the simplified Farrow structure.

\begin{figure}[tbp]
	\centering
	\centering \includegraphics[scale=1.0]{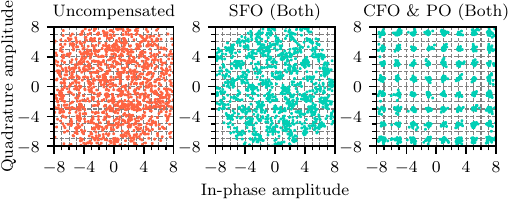}
	\caption{\textbf{\textit{Example \ref{ex:qam64cfocpo1branch}: Robustness to other impairments (CFO and PO) for the simplified estimators. }} Constellation diagrams using a first-degree Farrow. Uncompensated signal, after SFO, and after CFO/PO compensation.}
	\label{Flo:constellation1branch}
\end{figure}

\example\label{ex:qam64snr1branch}
\textbf{Sensitivity to noise comparison of the first-degree and full-Farrow based estimators.}
Figure~\ref{Flo:MSE} shows the NMSE versus SNR for $N_{\mathrm{MC}}{=}1\,000$ OFDM signals with $\Delta=488$ ($293$)~ppm and $\varepsilon=300$~ppm per SNR value (i.e., ${\mid} d{\mid} {\le} 0.5$ ($0.3$) during the estimation window $N{=}1024$ samples). For ${\mid} d{\mid} {\leq} 0.5$ ($0.3$), the simplified estimator achieves the same NMSE as the full Farrow-based estimator with ($m{=}1$) iterations for SNR values below approximately $25$~dB ($35$~dB). At higher SNRs, the NMSE of the simplified estimator is approximately one order of magnitude larger, e.g., around $10^{-3}$ ($10^{-4}$) at $60$~dB, compared to about $10^{-4}$ ($10^{-5}$) for the full Farrow-based estimator.

\begin{figure}[tbp]
	\centering
	\centering \includegraphics[scale=1.0]{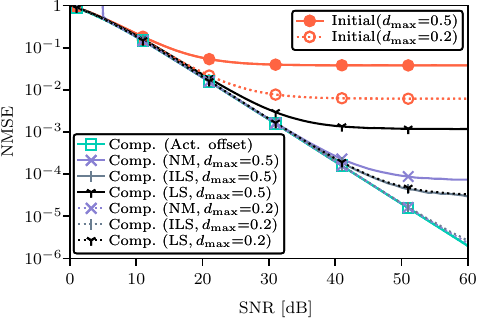}
	\caption{\textbf{\textit{Example \ref{ex:qam64snr1branch}: Sensitivity to noise comparison of the first-degree (LS) and full-Farrow based estimators (NM and ILS). }}NMSE versus SNR. } 
	\label{Flo:NMSE_simplified}
\end{figure}

\paragraph*{Approximation error for the first-degree Farrow}
The residual distortion observed in Fig.~\ref{Flo:constellation1branch}, and the difference in NMSE between the simplified and full Farrow-based estimators in Fig.~\ref{Flo:NMSE_simplified}, originate from the limited approximation capability of the simplified Farrow. To illustrate this effect, Fig.~\ref{Flo:approx1branch} reports the approximation error in the minimax sense (i.e., $\ell_{\infty}$) as a function of the normalized signal bandwidth\footnote{A normalized bandwidth of $\omega_{\max}/\pi=B$ corresponds to a signal spectrum occupying $[-B\pi, B\pi]$.} for several fractional delay values $|d|\le d_{\max}$, where $d_{\max}=0.5$ denotes the maximum supported fractional delay. As expected, the approximation error increases with both bandwidth and fractional delay, explaining the residual distortion observed after compensation in Fig.~\ref{Flo:constellation1branch}.

\begin{figure}[tbp]
	\centering
	\centering \includegraphics[scale=1.0]{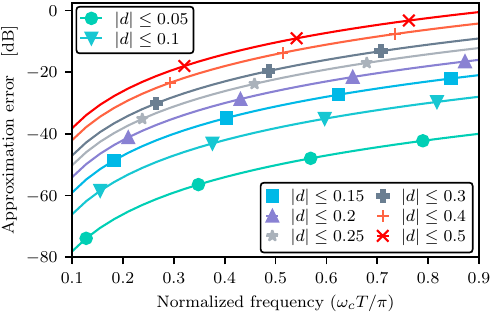}
	\caption{Approximation error of the first-degree Farrow based interpolator for different signal bandwidths and fractional delays.} 
	\label{Flo:approx1branch}
\end{figure}

\begin{remark}
	For a desired estimation-error levels (cf. Fig.~\ref{Flo:ex2_Farrow_approx_noise_param1} in Section~\ref{sec:complexity_estimation_accuracy}), the simplified estimator can be employed whenever the signal bandwidth and $d$ fall within a region where the Farrow approximation error in Fig.~\ref{Flo:approx1branch} remains below the acceptable (target) distortion level.
\end{remark}

\begin{remark}
Although $\Delta$ and $\varepsilon$ are unknown, practical systems typically operate under bounded SFO. Since the observation length $N$ is a design parameter, it may be selected to control the maximum induced delay $\{\max{\mid} d(n) {\mid}\}$ and thereby limit the approximation error of the simplified Farrow. However, reducing $N$ decreases the amount of data available for estimation and may increase the variance of the estimates, particularly at low SNR. Hence, the choice of $N$ reflects a trade-off between approximation error and noise level.
\end{remark}

\section{Conclusions\label{sec:conclusions}}
This paper presented time-domain Farrow-structure–based SFO estimation algorithms that utilize the compensator and therefore incur low implementation complexity. Two complementary estimators were developed: a Newton-based method with fast local convergence, and a reduced-complexity iterative least-squares (ILS) method with the first-degree Farrow structure utilized for the update computation. Both estimators are applicable to arbitrary bandlimited signals and can operate on a single signal component (real or imaginary part of a complex signal). Additionally, they are insensitive to other synchronization impairments such as CFO and PO.

Simulation results demonstrated that the proposed estimators provide accurate and robust joint SFO and STO estimates for a wide range of signals and SNRs. In most practical scenarios, a single iteration is sufficient, while a second iteration mainly improves performance at high SNR for the ILS method. The impact of the Farrow approximation error and estimation window length was analyzed, highlighting the trade-off between estimation accuracy and implementation complexity.

Finally, a simplified Farrow ($L{=}1$) estimator was proposed to admit a closed-form solution (i.e., no iterations) and further reduce implementation complexity, at the expense of increased estimation error at higher SNR due to the simplified Farrow structure accuracy. Overall, the results confirm that the proposed Farrow-based time-domain estimation algorithms provide an effective alternative to conventional frequency-domain SFO estimators.

\appendices
\section{Proof of Theorem \ref{teo:invertibility}}
\label{app:ILS_invertibility}


\begin{proof}
	By construction, let $\mathbf{A} \in \mathbb{R}^{M \times 2}$ be defined as in \eqref{eq:ILS_A},
	where $n_0, n_1, \dots, n_{N-1}$ are distinct sample indices and $u_1(n_l)$ are samples at the output of the Farrow subfilter corresponding to branch $k{=}1$.
	
	First, note that $\mathbf{A}^\top \mathbf{A}$ is symmetric~\eqref{eq:sym} and positive 
	semidefinite~\eqref{eq:PSD}, i.e.,
	\begin{align}
		(\mathbf{A}^\top \mathbf{A})^\top &= \mathbf{A}^\top (\mathbf{A}^\top)^\top 
		= \mathbf{A}^\top \mathbf{A}, \label{eq:sym} \\
		\mathbf{v}^\top (\mathbf{A}^\top \mathbf{A}) \mathbf{v} &= 
		(\mathbf{A}\mathbf{v})^\top (\mathbf{A}\mathbf{v}) = \|\mathbf{A}\mathbf{v}\|_2^2 {\geq} 0,
		\quad \forall \mathbf{v} \in \mathbb{R}^2. 
		\label{eq:PSD}
	\end{align}
	
	To prove positive definiteness (hence invertibility), it is enough to show that $\mathbf{v} = \mathbf{0}$ is the only solution of
	\begin{equation}
		\mathbf{A}\mathbf{v} = \mathbf{0}.
	\end{equation}
	Assume, for contradiction, that there exists a non-zero vector $\mathbf{v} = [v_1, v_2]^\top \neq \mathbf{0}$, such that $\mathbf{A}\mathbf{v} = \mathbf{0}$. This means
	\begin{equation}
		[n_l u_1(n_l)] v_1 + [u_1(n_l)] v_2 = 0, \quad l = 0, 1, \dots, N{-}1. 
		\label{eq:coordinate}
	\end{equation}
	Define $S = \{n_l : u_1(n_l) \neq 0\}$. By hypothesis, $|S| \geq 2$ and there exist $n_a, n_b \in S$ with $n_a \neq n_b$. For any $n_l \in S$, \eqref{eq:coordinate} can then be divided by $u_1(n_l) \neq 0$ to obtain
	\begin{equation}
		n_l v_1 + v_2 = 0, \quad \forall n_l \in S. 
		\label{eq:coordinate_s}
	\end{equation}
	In particular, for any pair $n_a, n_b \in S$, it yields
	\begin{align}
		n_a v_1 + v_2 &= 0, \label{eq:na} \\
		n_b v_1 + v_2 &= 0, \label{eq:nb}
	\end{align}
	and subtracting \eqref{eq:na} from \eqref{eq:nb} yields
	\begin{equation}
		(n_b - n_a) v_1 = 0.    
	\end{equation}
	Since $n_b \neq n_a$, we must have $v_1 = 0$, and substituting $v_1 = 0$ into \eqref{eq:coordinate_s} gives $v_2 = 0$, i.e., $\mathbf{v} = \mathbf{0}$, contradicting $\mathbf{v} \neq \mathbf{0}$. Therefore, $\mathbf{A}$ has full column rank, and hence $\mathbf{A}^\top \mathbf{A}$ is positive definite and invertible.
\end{proof}

\begin{small} 
	\bibliographystyle{IEEEtran}
	\bibliography{IEEEabrv, references/references} 
\end{small}

\vspace{-20 pt}
\begin{IEEEbiography}[{\includegraphics
	[width=1in,height=1.25in,clip, keepaspectratio]{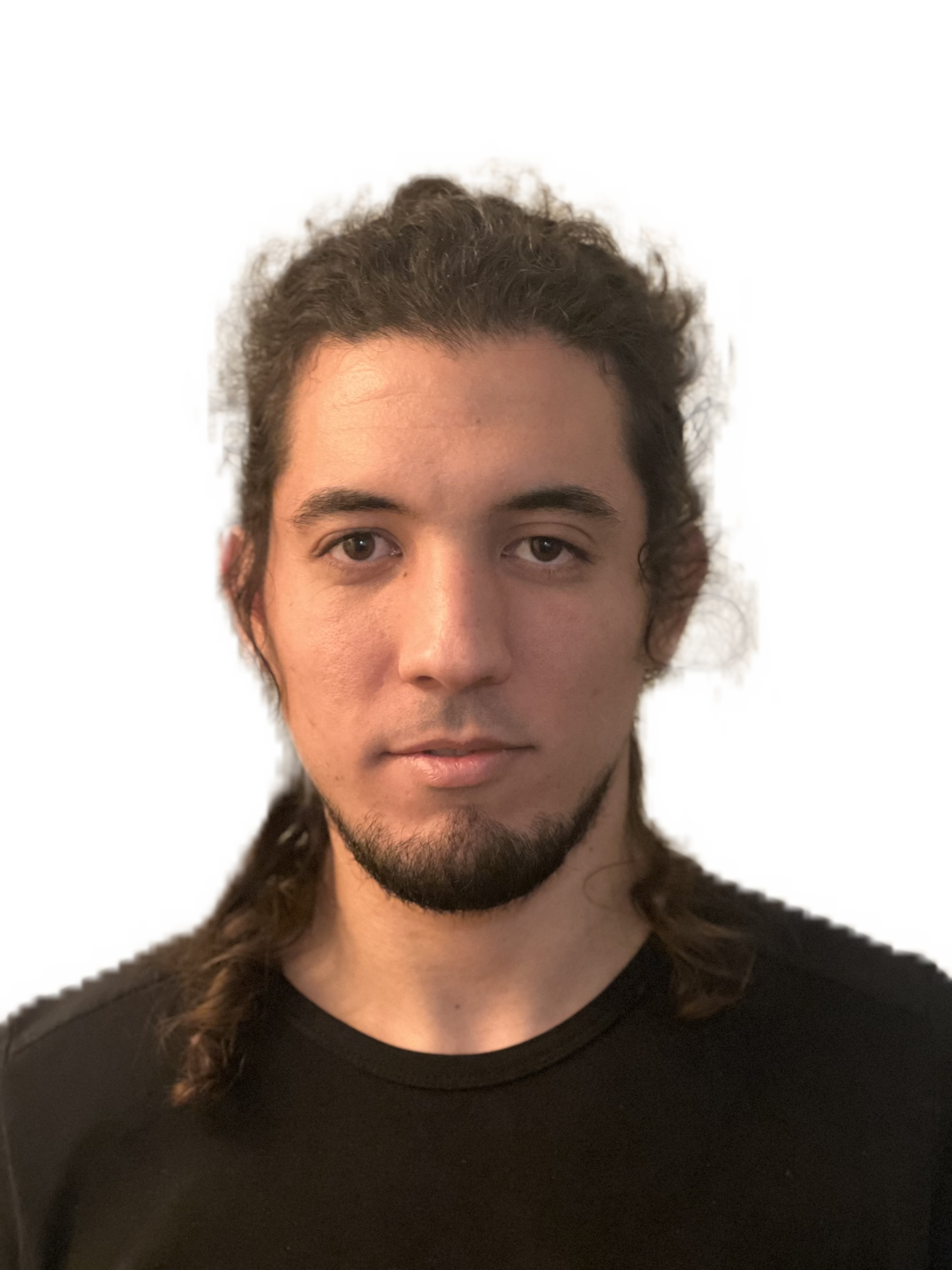}}]{Deijany Rodriguez Linares}
	(Graduate Student Member, IEEE) received the Bachelor of Science in Nuclear Engineering, a Postgraduate Diploma in Medical Physics and the Master of Science degree in Nuclear Engineering from the Higher Institute of Technologies and Applied Sciences (InSTEC), University of Havana, Cuba, in 2015, 2016 and 2018, respectively. He is currently pursuing a Ph.D. degree with the Division of Communication Systems, Department of Electrical Engineering, Link\"oping University, Sweden.
	From 2019 to 2020, he was an Associate Researcher at InSTEC, and from 2015 to 2018, he worked as a Junior Medical Physicist at the Cuban State Center for the Control of Drugs, Equipment and Medical Devices (CECMED). From 2019 to 2024, he was a Junior Associate of the Abdus Salam International Centre for Theoretical Physics (ICTP), Trieste, Italy. His research interests include signal processing, wireless communication, reinforcement learning, and nonconvex optimization.
\end{IEEEbiography}
\vspace{-20 pt}

\begin{IEEEbiography}[{\includegraphics
		[trim={7.0in 23.5in 7.0in 8.5in}, width=1in,height=1.25in,clip, keepaspectratio]{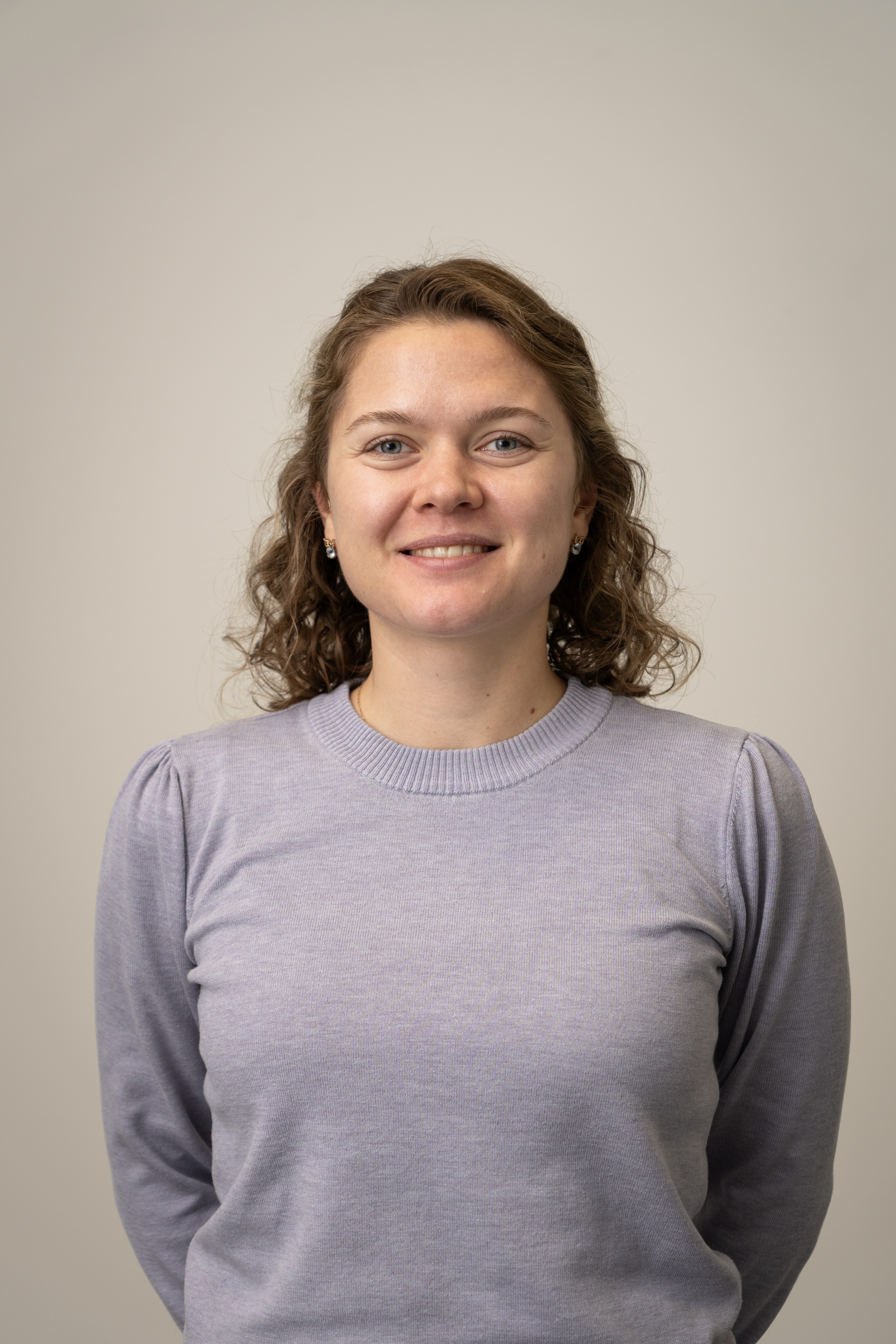}}]{Oksana Moryakova}
	(Graduate Student Member, IEEE) received the Specialist Degree in Radioelectronic Systems and Complexes (equivalent to Master of Science in Electrical Engineering) from Bauman Moscow State Technical University (\mbox{BMSTU}), Moscow, Russia, in 2020. From 2020 to 2021, she worked as a Research Engineer with Huawei Russian Research Institute. She is currently pursuing the Ph.D. degree with the Division of Communication Systems, Department of Electrical Engineering, Link\"oping University, Sweden. Her research interests include design and implementation of efficient and reconfigurable digital signal processing algorithms for communication systems, including variable digital filtering, synchronization, and linearization algorithms. 
\end{IEEEbiography}
\pdfbookmark[1]{Biography: HAkan Johansson}{bio-hakan}
\begin{IEEEbiography}[{\includegraphics
	[width=1in,height=1.25in,clip,
	keepaspectratio]{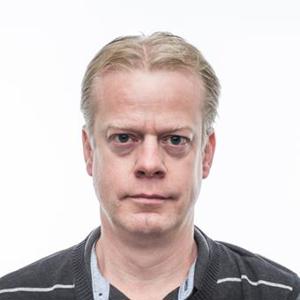}}]{H\aa kan Johansson}
	(S'97--M'98--SM'06) received the Master of Science degree in Computer Science and Engineering, and the Licentiate, Doctoral, and Docent degrees in Electronics Systems, from Link\"oping University, Sweden, in 1995, 1997, 1998, and 2001, respectively. During 1998 and 1999 he held a postdoctoral position with the Signal Processing Laboratory, Tampere University of Technology, Finland. He is currently a Professor at the Division of Communication Systems, Department of Electrical Engineering, Link\"oping University. He was one of the founders of the spin-off company Signal Processing Devices Sweden AB in 2004 (now Teledyne SP Devices). His research encompasses theory, design, and implementation of efficient and flexible signal processing systems for various purposes. He has authored or co-authored four books and some 80 journal papers and 150 conference papers. He has co-authored one journal paper and two conference papers that have received best paper awards and authored or co-authored three invited journal papers and four invited book chapters. He also holds eight patents. He served as a Technical Program Co-Chair for IEEE Int. Symposium on Circuits and Systems (ISCAS) 2017 and 2025. He has served as an Associate Editor for IEEE Trans. on Circuits and Systems I and II, IEEE Trans. Signal Processing, and IEEE Signal Processing Letters, and as an Area Editor for Digital Signal Processing (Elsevier).
\end{IEEEbiography}

\end{document}